\providecommand{\U}[1]{\protect\rule{.1in}{.1in}}
\newtheorem{remark}{Remark}
\newtheorem{proposition}{Proposition}
\begin{document}

\title{Transmission Schemes for Four-Way Relaying in Wireless Cellular Systems}
\author{%

\begin{tabular}
[c]{c}%
Huaping Liu$^{\dagger,\ast}$, Petar~Popovski$^{\ast}$, Elisabeth de Carvalho$^{\ast}$\\ Yuping Zhao$^{\dagger}$, Fan Sun$^{\ast}$ and Chan Dai Truyen Thai$^{\ast}$\\
$\dagger$State Key Laboratory of Advanced Optical Communication Systems and Networks\\ Peking University, China\\
$\ast$Department of Electronic Systems, Aalborg University, Denmark\\
Email:  $\left\{ {{\text{liuhp,yuping.zhao}}} \right\}$@pku.edu.cn, $\left\{ {{\text{petarp,edc,fs,ttc}}} \right\}$@es.aau.dk\\
\end{tabular}
}
\maketitle

\begin{abstract}
Two-way relaying in wireless systems has initiated a large research effort during the past few years. While one-way relay with a single data flow introduces loss in spectral efficiency due to its half-duplex operation, two-way relaying based on wireless network coding regains part of this loss by simultaneously processing the two data flows. In a broader perspective, the two-way traffic pattern is rather limited and it is of interest to investigate other traffic patterns where such a simultaneous processing of information flows can bring performance advantage. In this paper we consider a scenario beyond the usual two--way relaying:  \emph{a four-way relaying}, where each of the two Mobile Stations (MSs) has a two-way connection to the same Base Station (BS), while each connection is through a dedicated Relay Station (RS). While both RSs are in the range of the same BS, they are assumed to have \emph{antipodal positions} within the cell, such that they do not interfere with each other. We introduce and analyze a two-phase transmission scheme to serve the four-way traffic pattern defined in this scenario. Each phase consists of combined broadcast and multiple access. We analyze the achievable rate region of the new schemes for two different operational models for the RS, Decode-and-Forward (DF) and Amplify-and-Forward (AF), respectively. We compare the performance with a state-of-the-art reference scheme, time sharing is used between the two MSs, while each MS is served through a two-way relaying scheme. The results indicate that, when the RS operates in a DF mode, the achievable rate regions are significantly enlarged. On the other hand, for AF relaying, the gains are rather modest. The practical implication of the presented work is a novel insight on how to improve the spatial reuse in wireless cellular networks by coordinating the transmissions of the antipodal relays.
\end{abstract}

\section{Introduction}
Relay-based communication has matured in the past decades, both in terms of knowing the fundamental limits, but also with respect to practical system implementation and utilization. Capacity bounds and various cooperative strategies for relay networks have been studied in \cite{Cover}. \cite{Kramer} developed Decode-and-Forward (DF) relaying to multiple access relay channels and broadcast relay channels, and generalized Compress-and-Forward (CF) relaying to multiple relays as well. A paradigm shift occurred with the concept of network coding \cite{Ahlswede}, in which the relays process multiple data flows simultaneously and transmit functions of the incoming communication flows, rather than only replicating the incoming flows. This idea can bring profound gains in a wireless setting, notably in a scenario with two-way relaying\cite{Popovski}, \cite{Katti}. An overview of bidirectional relay protocols is given in \cite{PetarTWR}. The achievable rate regions for the two-way relaying channel under full-duplex assumption were given in \cite{Rankov}, but a viable assumption for wireless receivers is that they can operate in a half-duplex manner \cite{Madsen} \cite{Petarphy}. Note that for the half-duplex wireless transceivers, one-way relaying suffers from a loss in spectral efficiency as the relay cannot transmit and receive at the same time. Wireless network coding can help to regain part of this loss when two or more data flows are served simultaneously through the same relay. The work \cite{Kim} has analyzed the achievable rate regions of two-way relaying under half-duplex condition and compared various bidirectional relay protocols. Another related work is\cite{Oechtering}, which proves the optimal broadcast strategy for bidirectional relaying.
\begin{figure}
\centering
\includegraphics[width=10 cm]{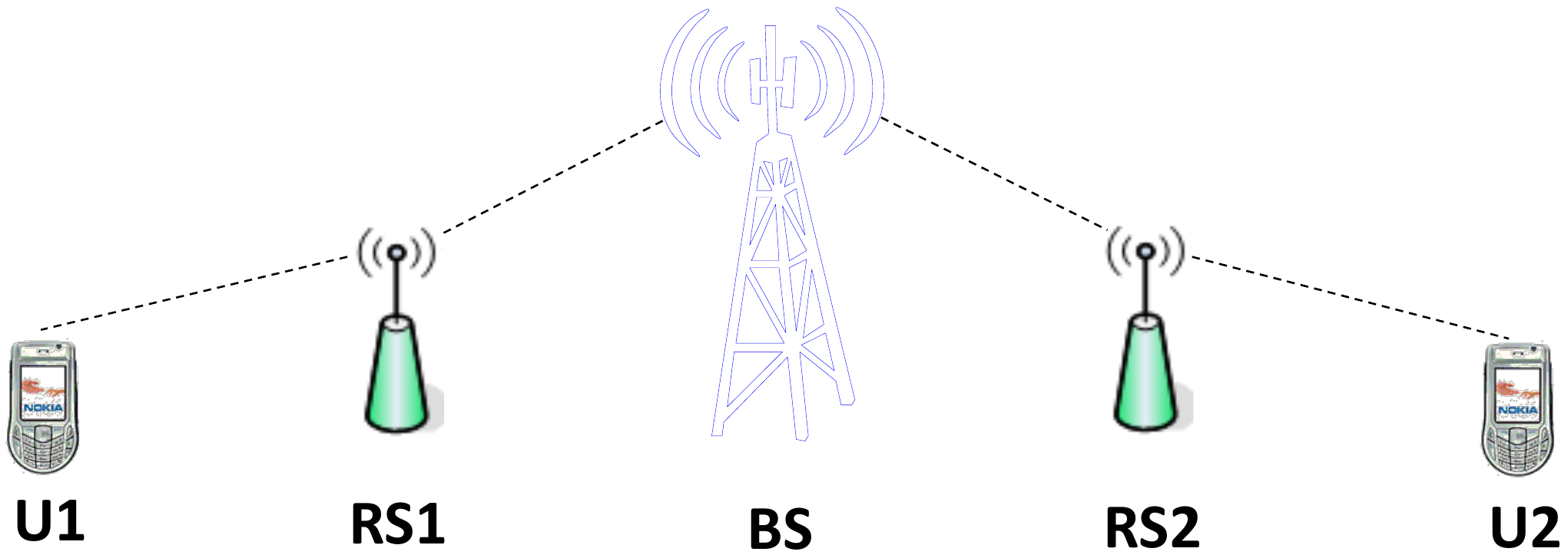}
\caption{A cellular network with four-way relaying using a Base Station (BS), two Relay Stations (RSs), denoted RS$1$ and RS$2$, and two Mobile Stations (MSs) U$1$ and U$2$.}%
\label{sen}%
\end{figure}

Although the benefits of wireless network coding have largely been confined to the canonical two-way relaying scenario, one can identify the underlying principles and then apply them in more generalized scenarios. Those principles are (1) simultaneous service of multiple flows over the wireless medium and (2) cancellation of interference based on previously gathered information. We have utilized these principles in order to devise transmission schemes in the case when the flows to a relayed and a direct user are simultaneously served~\cite{ChanICC11}. One of the main objectives of this paper is to leverage on these principles and investigate how they can be applied in a new, and very practical scenario. Namely, this involves  \emph{four-way relaying}, in which two Mobile Stations (MSs) have each a two-way connection to the same Base Station (BS), while each of them uses a dedicated Relay Station (RS). This is depicted on Fig.~\ref{sen}. Both RSs are in the range of the same BS, but they are at the ``opposite sides'' of the cell, i. e. have \emph{antipodal positions}, in a sense that they do not interfere with each other. For brevity, let BS$\rightarrow$RS$1$ denote the directional link from BS to RS$1$ (and similar for the other links). The state-of-the-art conventional scheme for this scenario is multiplexing two independent two-way relaying schemes in time as shown in Fig.~\ref{twoway:a}. Another conventional scheme is simultaneously conducting transmissions which do not interfere with each other as seen in Fig.~\ref{twoway:b}.

A similar scenario has been considered~\cite{Fang}, where DS-CDMA is used to avoid interference, the nodes use BPSK modulation and the relay applies  physical layer network coding (denoise-and-forward). Unlike \cite{Fang}, in our work we do not use orthogonal CDMA codes, but take advantage of the antipodal relay deployment in order to coordinate the interference. Furthermore, in our work we are not constrained to a specific modulation type, since we analyze the achievable rate region when the nodes use information-theoretic Gaussian codebooks. A careful look at the scenario reveals that the number of design possibilities is significantly increased. For example, an alternative scheme (Fig.~\ref{twoway:b}) could be the following 4-stage scheme: (1) BS$\rightarrow$RS$1$ and U$2$$\rightarrow$RS$2$, (2) RS$1$$\rightarrow$U$1$ and RS$2$$\rightarrow$BS, (3) BS$\rightarrow$RS$2$ and U$1$$\rightarrow$RS$1$, and (4) RS$1$$\rightarrow$BS and RS$2$$\rightarrow$U$2$. With similar observation, other transmission schemes can be proposed. We focus on a scheme
that consists of only two stages, where each phase consists of combined broadcast and multiple access. In the first phase, the BS broadcasts to RS$1$ and RS$2$ using superposition coding, while U$1$ and U$2$ are simultaneously carrying out their uplink transmissions. In the second phase, the relays RS$1$ and RS$2$ are broadcasting: RS$1$ to BS and U$1$, while RS$2$ to BS and U$2$. Assuming DF or AF operation at the relay, a reference scheme that represents the state-of-the-art can be defined as follows. Each MS is served through a two-way relaying scheme, while the BS applies time-sharing in order to serve both MSs. The achievable rate region for the considered scenario is four-dimensional. In order to obtain a better insight, we parametrize the two-way traffic associated with each of the MSs by defining \emph{downlink-uplink ratio}. The results show that, when the RS operates in a DF mode, the achievable rate regions are significantly enlarged. When AF is applied, in most cases the proposed two-phase transmission scheme has larger sum-rate than the reference scheme which has four phases.

The paper is organized as follows. Section (\ref{model}) introduces the notations and definitions about system and channel models.  Section (\ref{newscheme}) includes the analysis of the achievable rate region for two-phase transmission protocols.
Section (\ref{twowayscheme}) presents the achievable rate region of the reference transmission protocols which have four phases. Section (\ref{numerical}) shows the figures under some special conditions to give out an intuitive insight for the achievable rate regions.

\section{System and Channel Models}\label{model}
Consider a bidirectional cellular network in which a BS intends to exchange information with two mobile stations U$1$ and U$2$ with the aid of two relay stations RS$1$ and RS$2$, as Fig.~\ref{sen} shows. All
the nodes are half-duplex, such that a node can either transmit or receive at a given time. We assume that the MSs do not have direct links to/from the BS. For simplicity, we use $1,R1,B,R2,2$ as the indices in the formulas to denote the mobile station U$1$, relay station RS$1$, base station BS, relay station RS$2$ and mobile station U$2$ respectively. The channels are denoted by $h_{11}$(U$1$-RS$1$), $h_{12}$(RS$1$-BS), $h_{22}$(BS-RS$2$) and $h_{21}$(RS$2$-U$2$). Each channel ${h_l},l \in \{ 11,12,22,21\}$, is reciprocal, known at all the nodes. Each MS has a two-way, uplink/downlink traffic to/from the BS. The noise at all receivers ${z_j} \sim \mathcal{CN}(0,{\sigma ^2}) , j \in \{ 1,R1,B,R2,2\}$ is independent Additive White Gaussian Noise (AWGN) with zero mean and unit variance. If node $j$ is transmitting, its transmission power is bounded by ${\bar P_j}  , j \in \{ 1,R1,B,R2,2\}$, i.e., $E\left\{ {{{\left| {{x_j}} \right|}^2}} \right\} = {P_j} \le {\bar P_j}$ . The capacity of a single link is $C\left( \gamma  \right) = {\log _2}\left( {1 + \gamma } \right)$, where $\gamma$ is the Signal-to-Noise Ratio (SNR).

Fig.~\ref{twoway:a} illustrates a state-of-the-art transmission scheme based on time-division between two different two-way relaying instances. The first phase is the multiple access (MA) from U$1$ and BS to RS$1$. The second phase is the broadcast (BC) from RS$1$ to U$1$ and BS. Similarly, the third phase is the MA from U$2$ and BS to RS$2$, while the fourth phase is the BC from RS$2$ to U$2$ and BS. This four-phase four-way relaying scheme will serve as a reference scheme.
\begin{figure}[t]
  \centering
  \subfigure[Four stages scheme with 2 two-way relaying sessions]{
    \label{twoway:a} 
    \includegraphics[width=10 cm]{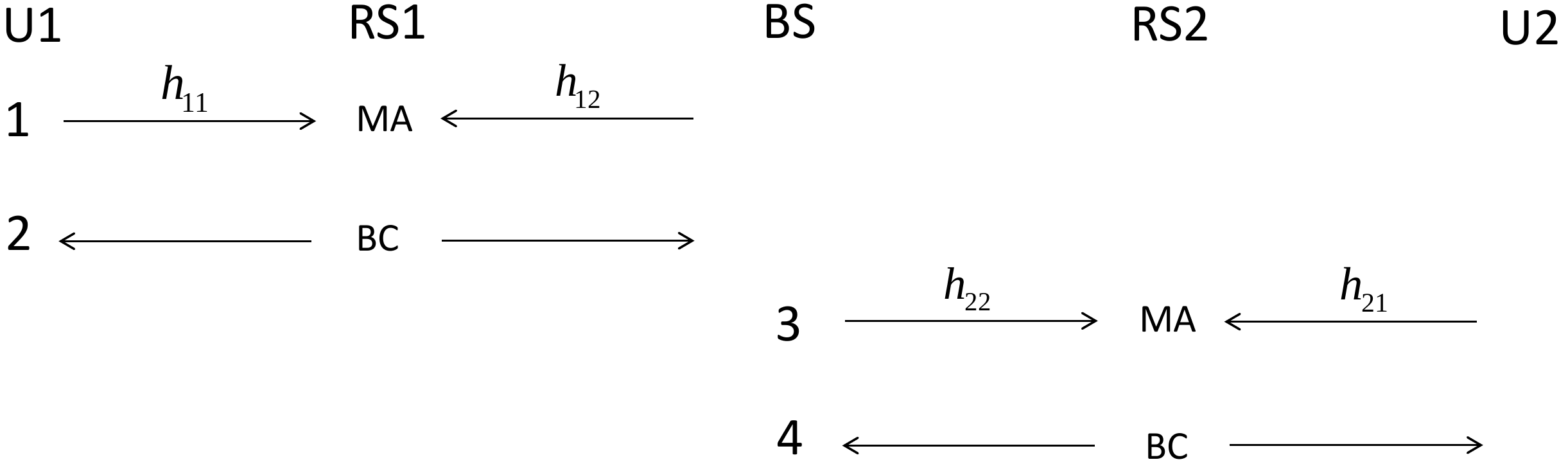}}
  \hspace{1in}
  \subfigure[Alternative four stages scheme]{
    \label{twoway:b} 
    \includegraphics[width=10 cm]{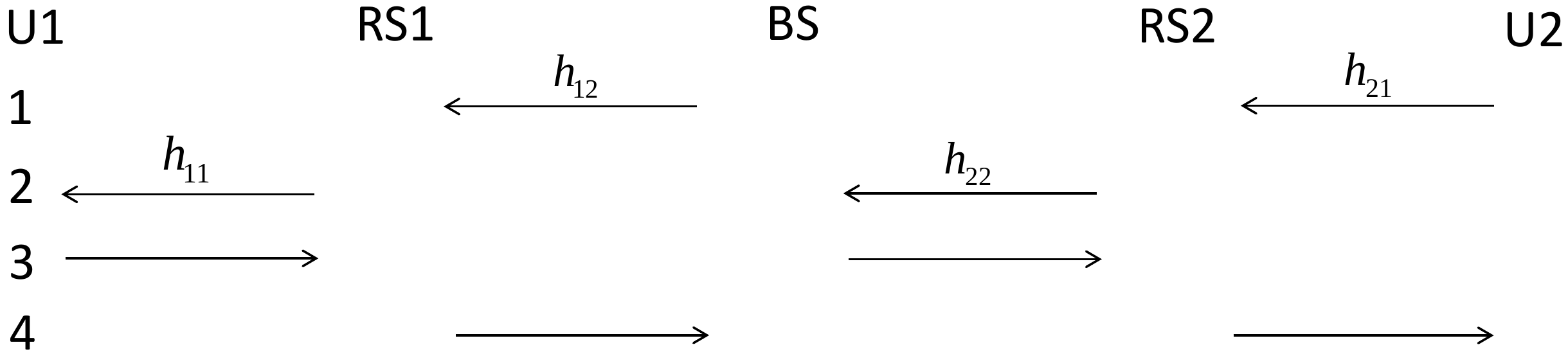}}
  \caption{Transmission schemes for four--way relaying consisting of four phases.}
  \label{twoway} 
\end{figure}

In the proposed scheme, the two downlink signals for U$1$ and U$2$ are broadcast by the BS using superposition coding. The whole transmission contains only two phases where communications with the two users occur simultaneously. Furthermore, a MA process and a BC process occur simultaneously in each phase, as the Fig.~\ref{4way} shows. The proposed scheme makes full use of the side information at U$1$, U$2$ and BS to cancel the self-interference.
\begin{figure}[h]
\centering
\includegraphics[width=10 cm]{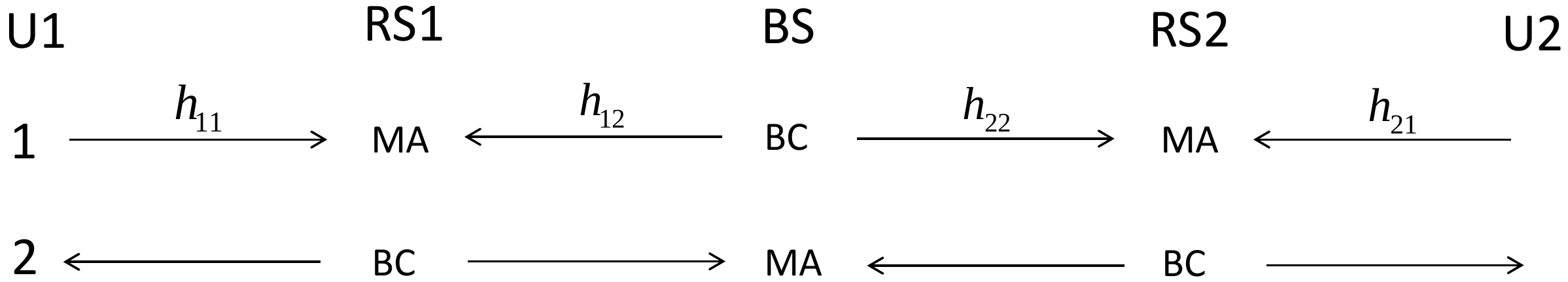}
\caption{A new transmission scheme for four--way relaying consisting of two phases.}%
\label{4way}%
\end{figure}
\section{Achievable rate regions of two-phase four-way relaying}\label{newscheme}
The signals sent by U$1$ and U$2$ are denoted by ${x_1}$ and ${x_2}$, respectively. The BS uses superposition coding and the broadcast signal is
\begin{equation}
{x_B} = \sqrt \alpha  {x_{B1}} + \sqrt {1 - \alpha } {x_{B2}},~\alpha  \in [0,1]
\label{supcode}%
\end{equation}where ${x_{B1}}({x_{B2}})$ is the signal intended for U$1$(U$2$) and $E\left\{ {{{\left| {{x_{B1}}} \right|}^2}} \right\} = E\left\{ {{{\left| {{x_{B2}}} \right|}^2}} \right\} = E\left\{ {{{\left| {{x_B}} \right|}^2}} \right\}={P_B} \le  {\bar P_B}$. Define $R_i^d$ as the downlink data rate of U$i$ and define $R_i^u$ as the uplink data rate of U$i$.

There are one BC process and two MA processes in the first phase: BS broadcasts the data intended to U$1$ and the data intended to U$2$ using superposition coding, while U$1$ and U$2$ transmit their uplink data. A MA occurs at RS$1$ with BS and U$1$ as transmitters, while another MA occurs at RS$2$ with BS and U$2$ as transmitters. In the second phase, there are two BC processes and one MA process. Each relay broadcasts a signal that is a function of the signal received in the first phase, while the BS acts as a receiver over a MA channel as the two RSs are transmitting simultaneously. Note that this is not an ordinary MA channel, since BS has a side information (i. e. its own information sent in phase 1) about the transmitted signals from the RSs. Finally, in the second phase, each of the users receives a
signal only from the relay and, similar to the usual two-way relaying, removes the self-information and decodes the desired signal. Next, we derive the rate region of the four-way relaying scheme when the relay operates in AF and DF mode, respectively.

\subsection{Amplify-and-forward}
At the end of phase 1, RS$i$ receives
\begin{equation}
{y_{Ri}} = {h_{i1}}{x_i} + {h_{i2}}{x_B} + {z_{Ri}},   i = 1,2.
\label{AF_step1_R1}%
\end{equation}

After reception, RS$1$ and RS$2$ amplify the received signals as follow:
\begin{equation}
{x_{Ri}} = {\beta _i}{y_{Ri}} , {\beta _i} = {{{P_{Ri}}} \over {\sqrt {{{\left| {{h_{i1}}} \right|}^2}{P_i} + {{\left| {{h_{i2}}} \right|}^2}{P_B} + 1} }},   i = 1,2
\label{AF_step2_R1}%
\end{equation}
here, ${\beta _i}$ is the amplification factor according to the relay transmission power constraints. ${x_{Ri}}$ is the signal broadcast by RS$i$.

At the end of phase 2, U$i$ receives ${y_i}$. Based on the side information about $x_i$, the channels and ${\beta _i}$, U$i$ can cancel the contribution of ${x_i}$ from ${y_i}$ to get ${\bar y_i}$,
\begin{equation}
{{\bar y}_i} = {h_{i1}}{h_{i2}}{\beta _i}\left( {\sqrt \alpha  {x_{B1}} + \sqrt {1 - \alpha } {x_{B2}}} \right)  + {h_{i1}}{\beta _i}{z_{Ri}} + {z_i},   i = 1,2.
\label{AF_step2_ui}%
\end{equation}

 As the BS has the side information about ${x_B}$, it can remove the contribution of ${x_B}$ from the received signal ${y_B}$ to get ${{\bar y}_B}$,
\begin{equation}
{{\bar y}_B} = {h_{11}}{h_{12}}{\beta _1}{x_1} + {h_{22}}{h_{21}}{\beta _2}{x_2} + {h_{12}}{\beta _1}{z_{R1}} + {h_{22}}{\beta _2}{z_{R2}} + {z_B}.
\label{AF_step2_BS}%
\end{equation}

Equation (\ref{AF_step2_ui}) describes a BC channel where the signals are sent through superposition coding: ${x_{Bi}}$ is intended to user i.  Equation (\ref{AF_step2_BS}) describes a MA channel. Using side information, the system becomes equivalent to 2 direct communication channels between the BS and the users. The first channel is a BC channel and the second channel is a MA channel, for which known information theory results can be used.

The SNRs of an equivalent BC channel defined in (\ref{AF_step2_ui}) are given by
\begin{align}
{S_i}  &= \frac{{E\left\{{{\left| {{h_{i1}}{h_{i2}}{\beta _i}{x_B}} \right|}^2}\right\}}}{{E\left\{ {{\left| {{h_{i1}}{\beta _i}{z_{Ri}} + {z_i}} \right|}^2}\right\}}}
={{{{\left| {{h_{i1}}} \right|}^2}{{\left| {{h_{i2}}} \right|}^2}{P_B}{P_{Ri}}} \over {{{\left| {{h_{i1}}} \right|}^2}\left( {{P_i} + {P_{Ri}}} \right) + {{\left| {{h_{i2}}} \right|}^2}{P_B} + 1}} ,   i = 1,2.
 \label{Si}
\end{align}

The SNRs of an equivalent MA channel defined in (\ref{AF_step2_BS}) are given by
\begin{equation}
\begin{gathered}
  {S'_i} = \frac{{E\left\{{{\left| {{h_{i1}}{h_{i2}}{\beta _i}{x_i}} \right|}^2}\right\}}}{{E\left\{{{\left| {{h_{12}}{\beta _1}{z_{R1}} + {h_{22}}{\beta _2}{z_{R2}} + {z_B}} \right|}^2}\right\}}}
      ={{{{{{\left| {{h_{i1}}} \right|}^2}{{\left| {{h_{i2}}} \right|}^2}{P_i}{P_{Ri}}} \over {{{\left| {{h_{i1}}} \right|}^2}{P_i} + {{\left| {{h_{i2}}} \right|}^2}{P_B} + 1}}} \over {{{{{\left| {{h_{12}}} \right|}^2}{P_{R1}}} \over {{{\left| {{h_{11}}} \right|}^2}{P_1} + {{\left| {{h_{12}}} \right|}^2}{P_B} + 1}} + {{{{\left| {{h_{22}}} \right|}^2}{P_{R2}}} \over {{{\left| {{h_{21}}} \right|}^2}{P_2} + {{\left| {{h_{22}}} \right|}^2}{P_B} + 1}} + 1}} , i = 1,2.  \hfill \\
\end{gathered}
\label{AF_Sp}%
\end{equation}

\begin{proposition} The achievable rate region of the two-phase four-way relaying scheme using AF is
\begin{eqnarray}
\nonumber &&{\rm{when}}~~{S_1} > {S_2}\\
\label{newAF1} \nonumber &&~~~~R_1^d < \frac{1}{2}C\left( {{S_1}\alpha } \right), R_2^d < \frac{1}{2}C\left( {\frac{{{S_2}(1 - \alpha )}}{{{S_2}\alpha  + 1}}} \right), R_1^u < {1 \over 2}C\left( {{S'_1}} \right)\\
\label{newAF3}  &&~~~~R_2^u < {1 \over 2}C\left( {{S'_2}} \right), R_1^u + R_2^u < \frac{1}{2}C\left( {{S'_1} + {S'_2}} \right)\\
\nonumber &&{\rm{when}}~~{S_1} \leq {S_2}\\
\label{newAF21} \nonumber &&~~~~R_1^d < \frac{1}{2}C\left( {\frac{{{S_1}\alpha }}{{{S_1}(1 - \alpha ) + 1}}} \right), R_2^d < \frac{1}{2}C\left( {{S_2}(1 - \alpha )} \right), R_1^u < {1 \over 2}C\left( {{S'_1}} \right)\\
\label{newAF23}  &&~~~~R_2^u < {1 \over 2}C\left( {{S'_2}} \right), R_1^u + R_2^u < \frac{1}{2}C\left( {{S'_1} + {S'_2}} \right).
\end{eqnarray}
\end{proposition}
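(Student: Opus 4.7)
The plan is to reduce the two-phase protocol, after self-interference cancellation, to two classical Gaussian channels that are decoupled by construction: a Gaussian broadcast channel (BC) from the BS to U$1$ and U$2$ governed by (\ref{AF_step2_ui}), and a Gaussian multiple-access channel (MAC) from U$1$ and U$2$ to the BS governed by (\ref{AF_step2_BS}). Since each of these effective channels is used only during phase $2$, the achievable rates carry a $\frac{1}{2}$ prefactor reflecting the half-duplex two-phase structure. The key observation is that once $x_i$ is used to cancel the self-interference at U$i$ and $x_B$ is used to cancel the self-interference at the BS, the residual channels are a standard Gaussian BC and a standard Gaussian MAC, and their achievable regions are classical.

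For the downlink, I would regard (\ref{AF_step2_ui}) as a Gaussian BC in which both users observe the same superposition $\sqrt{\alpha}\,x_{B1}+\sqrt{1-\alpha}\,x_{B2}$ scaled by a user-specific gain $h_{i1}h_{i2}\beta_i$ and corrupted by independent Gaussian noise whose variance is the denominator in (\ref{Si}). Substituting $\beta_i$ from (\ref{AF_step2_R1}) and collecting terms reproduces the received SNRs $S_1,S_2$ stated in (\ref{Si}). I would then invoke the standard superposition-coding / SIC capacity of the degraded Gaussian BC, with power split $\alpha,1-\alpha$: when $S_1>S_2$, U$1$ first decodes $x_{B2}$ and cancels it, yielding $R_1^d<\frac{1}{2}C(S_1\alpha)$, while U$2$ treats $x_{B1}$ as noise, yielding $R_2^d<\frac{1}{2}C\left(\frac{S_2(1-\alpha)}{S_2\alpha+1}\right)$. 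The case $S_1\le S_2$ is symmetric and simply swaps the roles of the two users, producing the other branch of the proposition.

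For the uplink, I would identify (\ref{AF_step2_BS}) as a two-user Gaussian MAC with inputs $x_1,x_2$ of powers $P_1,P_2$, effective per-user channel gains $h_{i1}h_{i2}\beta_i$, and aggregate noise $h_{12}\beta_1 z_{R1}+h_{22}\beta_2 z_{R2}+z_B$. Because $z_{R1},z_{R2},z_B$ are mutually independent and Gaussian, the aggregate noise is Gaussian with variance equal to the denominator of (\ref{AF_Sp}), so the per-user SNRs are exactly $S'_1,S'_2$ as stated there. Applying the pentagonal Gaussian MAC region, together with the $\frac{1}{2}$ prefactor, immediately reproduces the three uplink inequalities $R_1^u<\frac{1}{2}C(S'_1)$, $R_2^u<\frac{1}{2}C(S'_2)$, and $R_1^u+R_2^u<\frac{1}{2}C(S'_1+S'_2)$ in the proposition.

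The main technical obstacle is the bookkeeping when substituting the amplification factor $\beta_i$ from (\ref{AF_step2_R1}) into the second-moment computations for $S_i$ and $S'_i$: in particular, one needs to verify that the cross terms in these variances vanish because the data symbols $x_1,x_2,x_{B1},x_{B2}$ and the noises $z_{R1},z_{R2},z_1,z_2,z_B$ are mutually independent with zero mean, so that the denominators of (\ref{Si}) and (\ref{AF_Sp}) come out in their clean forms. Once this algebra is settled, the remainder of the proof is conceptually routine: recognize the BC, recognize the MAC, and invoke their known achievable rate regions in the appropriate phase.
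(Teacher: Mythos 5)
Your proposal is correct and follows essentially the same route as the paper: after self-interference cancellation, identify (\ref{AF_step2_ui}) as a degraded Gaussian BC with SNRs $S_1,S_2$ and (\ref{AF_step2_BS}) as a Gaussian MAC with SNRs $S'_1,S'_2$, then invoke the classical superposition/SIC BC region and the pentagonal MAC region with the factor $\frac{1}{2}$ from the equal-duration two-phase half-duplex operation. The only cosmetic difference is that the paper attributes the $\frac{1}{2}$ to each node transmitting or receiving during only one of the two equal phases rather than to the effective channels being ``used only during phase 2,'' but this does not affect the argument.
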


\begin{proof}
The whole system is composed by a BC channel and a MA channel.
\subsubsection{Broadcast Channel}
From \cite{Gamal}, when ${S_1} > {S_2}$ the optimal decoding consists in decoding ${x_{B2}}$ at U$2$ treating ${x_{B1}}$ as Gaussian noise. At U$1$, the decoder first decodes ${x_{B2}}$, then cancels ${x_{B2}}$ from the received signal ${{\bar y}_1}$. Finally, U$1$ decodes ${x_{B1}}$. The capacity region of the AWGN BC channel in (\ref{AF_step2_ui}) is
\begin{equation}
\label{AF_BC_case1}
R_1^d < \frac{1}{2}C\left( {{S_1}\alpha } \right), ~~R_2^d < \frac{1}{2}C\left( {\frac{{{S_2}(1 - \alpha )}}{{{S_2}\alpha  + 1}}} \right)
\end{equation}here, ${R_1^d}$ is the rate of ${x_{B1}}$, ${R_2^d}$ is the rate of ${x_{B2}}$. When ${S_1} \leq {S_2}$ the capacity region is similar. The scaling factors 1/2 in (\ref{AF_BC_case1}), account for the half-duplex constraint. Furthermore, if RS$1$ and RS$2$ use AF, phase 1 and phase 2 have the same duration. Then U$1$ and U$2$ only spend half of the time to receive the signal. BS spends half of the time to transmit. When ${S_1} \leqslant {S_2}$, the capacity region of the AWGN BC channel in (\ref{AF_step2_ui}) is,
\begin{equation}
\label{AF_BC_3}  R_1^d < \frac{1}{2}C\left( {\frac{{{S_1}\alpha }}{{{S_1}(1 - \alpha ) + 1}}} \right),~~R_2^d < \frac{1}{2}C\left( {{S_2}(1 - \alpha )} \right).
\end{equation}

\subsubsection{Multiple access channel}
From \cite{Gamal}, the capacity region of AWGN MA channel in (\ref{AF_step2_BS}) is
\begin{equation}
\label{AF_MACC}
R_1^u < \frac{1}{2}C\left( {{S'_1}} \right),~~R_2^u < \frac{1}{2}C\left( {{S'_2}} \right),~~ R_1^u + R_2^u < \frac{1}{2}C\left( {{S'_1} + {S'_2}} \right).
\end{equation} We have scaling factors 1/2 in (\ref{AF_MACC}), because U$1$ and U$2$ only spend half of the time to transmit the signal. BS spends half of the time to receive the signal.



Combining (\ref{AF_BC_case1})-(\ref{AF_MACC}), we obtain the achievable rate region of two-phase AF scheme as shown in proposition 1.
\end{proof}

\subsection{Decode-and-forward}
At the end of phase 1, RS$i$ receives
\begin{equation}
{y_{Ri}} = {h_{i1}}{x_i} + {h_{i2}}\sqrt \alpha_i  {x_{Bi}} + {h_{i2}}\sqrt {1 - \alpha_i } {x_{Bj}} + {z_{Ri}}
\label{y_R1}.%
\end{equation}
here $\left( {i,j} \right) \in \left\{ {\left( {1,2} \right),\left( {2,1} \right)} \right\}$, ${\alpha _1} = \alpha ,{\alpha _2} = 1 - \alpha $.

The decoder in RS$i$ decodes the signal ${x_i}$ and ${x_{Bi}}$ , then re-encodes the messages of ${x_i}$ and ${x_{Bi}}$ into $x_{Ri}$. RS$i$ does not need to decode ${x_{Bj}}$, as it is intended to U$j$. In fact, ${x_{Bj}}$ should only be decoded if there is a  benefit for decoding ${x_i}$ and ${x_{Bi}}$, otherwise ${x_{Bj}}$ should be treated as noise. However, the successful decoding of ${x_{Bj}}$ will impose a rate limitation on $R_j^d$.

At the end of phase 2, U$i$ receives
\begin{equation}
y_i = {h_{i1}}x_{Ri} + {z_i}.
\label{yi}%
\end{equation}

Meanwhile BS receives,
\begin{equation}
y_B = {h_{12}}x_{R1} + {h_{22}}x_{R2} + {z_B}.
\label{ybs}%
\end{equation}

Equations (\ref{yi}) and (\ref{ybs}) describe the BC channel from RS$i$ to U$i$ and BS which have the side information. Equation (\ref{ybs}) describes the MA channel at BS with side information.

\begin{proposition}The achievable rate region of the two-phase four-way relaying scheme using DF is the convex closure of all 4-dimensional rate tuples satisfying ${\left( {R_1^u,R_1^d,R_2^u,R_2^d} \right) \in D}$.
\begin{subequations}
\label{defineD}
\begin{align}
\label{defineD3} &D_{Ri}^{\left( 1 \right)} = M_{Ri}^2 \cup M_{Ri}^3\\
\label{defineD2} &{D^{\left( 1 \right)}} = D_{R1}^{\left( 1 \right)} \cap D_{R2}^{\left( 1 \right)}\\
\label{defineD1} &D = {D^{\left( 1 \right)}} \cap {D^{\left( 2 \right)}}
\end{align}
\end{subequations}where $M_{Ri}^2$, $M_{Ri}^3$ and ${D^{\left( 2 \right)}}$ are defined as follows
\begin{eqnarray}
M_{Ri}^2 &=& \left\{ {\left( {R_i^u,R_i^d,R_j^u,R_j^d} \right)|\left( {R_i^u,R_i^d} \right) \in } \right (\ref{MAC_R1M1})\} \textrm{ where }\left( {i,j} \right) \in \left\{ {\left( {1,2} \right),\left( {2,1} \right)} \right\} \label{M^2} \\
M_{Ri}^3 &=& \left\{ {\left( {R_i^u,R_i^d,R_j^u,R_j^d} \right)|\left( {R_i^u,R_i^d,R_j^d} \right) \in } \right (\ref{MAC_R1})\}\textrm{ where } \left( {i,j} \right) \in \left\{ {\left( {1,2} \right),\left( {2,1} \right)} \right\} \label{M^3} \\
D^{\left( 2 \right)} &=& \left\{ \left( {R_1^u,R_1^d,R_2^u,R_2^d} \right)|\left( {R_1^u,R_1^d,R_2^u,R_2^d} \right) \in  \right (\ref{bcop}) \} \label{D^2}
\end{eqnarray}
where $0 \leq \alpha_i, \tau \leq 1$.
\end{proposition}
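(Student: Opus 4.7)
The plan is to decompose the two-phase protocol into its phase-1 constraints (arising from the two MACs seen at RS$1$ and RS$2$) and its phase-2 constraints (arising from the joint BC/MAC with receiver side information at U$1$, U$2$ and the BS), and then to intersect the two sets. A convex closure with respect to the superposition split parameters $\alpha_1,\alpha_2$ and the phase-time split $\tau$ will deliver the stated region $D$.

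For phase 1, I would look at each relay in isolation and treat (\ref{y_R1}) as a three-transmitter Gaussian MAC with inputs $x_i$, $x_{Bi}$, $x_{Bj}$. Since $x_{Bj}$ is destined for the other user, RS$i$ has two natural decoding strategies: (a) decode only the pair $(x_i,x_{Bi})$ and treat $x_{Bj}$ as additional Gaussian noise, which yields the two-user MAC region $M_{Ri}^2$ referenced in (\ref{M^2}); or (b) jointly decode the triple, which yields the three-user MAC region $M_{Ri}^3$ referenced in (\ref{M^3}) but imposes an additional rate constraint on $R_j^d$. Taking the union over these two strategies gives $D_{Ri}^{(1)}$ as in (\ref{defineD3}). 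Both relays must simultaneously decode what they are supposed to decode, so intersecting across $i=1,2$ gives $D^{(1)}$ as in (\ref{defineD2}).

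For phase 2, I would exploit the antipodal geometry together with the side information available at every receiver. User U$i$ only hears $x_{Ri}$; since it already knows its own $x_i$, which is jointly encoded into $x_{Ri}$ by superposition at RS$i$, it faces an effective point-to-point Gaussian channel carrying $x_{Bi}$, yielding the per-user constraint on $R_i^d$. The BS observes the two-user Gaussian MAC in (\ref{ybs}); since it knows both $x_{B1}$ and $x_{B2}$, as well as the superposition structure used at each RS to construct $x_{Ri}$, it can subtract the known components and is left with a standard Gaussian MAC carrying $x_1$ and $x_2$. Collecting the individual and sum-rate constraints from this reduced MAC together with the downlink constraints above yields $D^{(2)}$ as in (\ref{D^2}).

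The final step is to take $D^{(1)}\cap D^{(2)}$ and convexify over $(\alpha_1,\alpha_2,\tau)\in[0,1]^3$, where $\tau$ is the fraction of a frame allocated to phase 1 (DF, unlike AF, permits asymmetric phase durations, which is what makes $\tau$ an honest free parameter rather than being pinned to $1/2$). The step I expect to be most delicate is phase 1: the two relay MACs share the broadcast inputs $x_{B1},x_{B2}$, so the decoding choice at one relay reshapes the effective noise at the other, and I will need to verify that the union-over-strategies construction at each RS really commutes with the intersection across RSs in the sense that no hidden joint coding constraint is missed and that the correct superposition law \eqref{supcode} is respected at both relays. Once that bookkeeping is in place, phase 2 reduces to textbook Gaussian BC and MAC analysis with side information, and the proposition follows.
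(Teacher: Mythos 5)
Your phase-1 analysis matches the paper's: two decoding strategies at each relay (decode $(x_i,x_{Bi})$ treating $x_{Bj}$ as noise, or jointly decode all three), union per relay, intersection across the two relays, then convexification over $(\alpha,\tau)$. The gap is in phase 2. You propose that RS$i$ builds $x_{Ri}$ by superposition of an uplink and a downlink codeword and that each receiver subtracts the component it knows. With superposition this subtraction is possible but costs power: if RS$i$ allocates a fraction $\beta_i$ of $P_{Ri}$ to the uplink codeword, then U$i$, after cancelling the uplink part, can only support $R_i^d < (1-\tau)C\left(|h_{i1}|^2(1-\beta_i)P_{Ri}\right)$, and the BS, after cancelling the downlink parts, sees a MAC with powers $\beta_i P_{Ri}$. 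That region is strictly inside (\ref{bcop}), which asserts the full single-user downlink capacities $C\left(|h_{i1}|^2 P_{Ri}\right)$ \emph{simultaneously} with the full-power uplink MAC region (\ref{bcop_3})--(\ref{bcop_5}). So your construction, as stated, does not establish the proposition's $D^{(2)}$.

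The paper gets around this by not superimposing at the relays: each RS re-encodes the message pair into a single joint codeword, so that a receiver knowing one of the two messages decodes the other at the full-power rate; this is the broadcast-with-receiver-side-information result invoked as Theorem 2 of \cite{KramerBC} (see also \cite{Oechtering}). Moreover, because the two relay broadcasts collide at the BS, phase 2 is not a textbook setting --- the paper's Remark 1 points out that no optimal strategy is known for this MAC extension of the side-information BC --- and Appendix A supplies the required achievability argument: the BS uses successive decoding (decode RS$1$'s codeword treating RS$2$'s as noise, cancel, then decode RS$2$'s), which attains one corner point of the MAC region while both $R_1^d$ and $R_2^d$ stay at $C\left(|h_{i1}|^2 P_{Ri}\right)$; the reverse order attains the other corner point; and time-sharing between the corner-point codewords (and with the axis points) sweeps out all of (\ref{bcop}). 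Your plan would need to replace ``superposition plus subtraction'' with this joint re-encoding plus successive-decoding/time-sharing argument (or an equivalent) to reach the claimed $D^{(2)}$. A minor further point: your concern that the decoding choice at one relay ``reshapes the effective noise at the other'' is unfounded --- the phase-1 transmitted signals are fixed by (\ref{supcode}) independently of the relays' decoding strategies, and the only coupling is the extra constraint on $R_j^d$ incurred when RS$i$ elects to decode $x_{Bj}$.
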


\begin{subequations}
\label{MAC_R1M1}
\begin{numcases}{}
\label{MAC_R1_s21}  R_i^u < \tau C\left( {{{{\left| {{h_{i1}}} \right|}^2}{P_i}} \over {\left( {1 - {\alpha _i}} \right){{\left| {{h_{i2}}} \right|}^2}{P_B} + 1}} \right),~~R_i^d < \tau C\left( {{{\alpha _i}{{\left| {{h_{i2}}} \right|}^2}{P_B}} \over {\left( {1 - {\alpha _i}} \right){{\left| {{h_{i2}}} \right|}^2}{P_B} + 1}} \right)\\
\label{MAC_R1_s22}  R_i^u + R_i^d < \tau C\left({{{{\left| {{h_{i1}}} \right|}^2}{P_i} + {\alpha _i}{{\left| {{h_{i2}}} \right|}^2}{P_B}} \over {\left( {1 - {\alpha _i}} \right){{\left| {{h_{i2}}} \right|}^2}{P_B} + 1}} \right).
\end{numcases}
\end{subequations}

\begin{subequations}
\label{MAC_R1}
\begin{numcases}{}
\label{MAC_R1_1}  R_i^u < \tau C\left( {{{\left| {{h_{i1}}} \right|}^2}{P_i}} \right),~~R_i^d < \tau C\left({{\alpha _i}{{\left| {{h_{i2}}} \right|}^2}{P_B}}\right),~~R_j^d < \tau C\left( {\left( {1 - {\alpha _i}} \right){{\left| {{h_{i2}}} \right|}^2}{P_B}} \right)\\
\label{MAC_R1_4}  R_i^u + R_i^d < \tau C\left( {{{\left| {{h_{i1}}} \right|}^2}{P_i}} + {{\alpha _i}{{\left| {{h_{i2}}} \right|}^2}{P_B}} \right)\\
\label{MAC_R1_5}  R_i^u + R_j^d < \tau C\left({{{\left| {{h_{i1}}} \right|}^2}{P_i}} +{\left( {1 - {\alpha _i}} \right){{\left| {{h_{i2}}} \right|}^2}{P_B}} \right)\\
\label{MAC_R1_6}  R_i^d + R_j^d < \tau C\left({{\left| {{h_{i2}}} \right|}^2}{P_B}\right)\\
\label{MAC_R1_7}  R_i^u + R_i^d + R_j^d < \tau C\left({{{\left| {{h_{i1}}} \right|}^2}{P_i}} + {{\left| {{h_{i2}}} \right|}^2}{P_B}\right).
\end{numcases}
\end{subequations}

\begin{subequations}
\label{bcop}
\begin{numcases}{}
\label{bcop_1}  R_1^d < \left( {1 - \tau } \right)C\left( {\left| {{h_{11}}} \right|^2}{P_{R1}} \right),~~R_2^d < \left( {1 - \tau } \right)C\left( {\left| {{h_{21}}} \right|^2}{P_{R2}} \right)\\
\label{bcop_3}  R_1^u < \left( {1 - \tau } \right)C\left( {\left| {{h_{12}}} \right|^2}{P_{R1}} \right),~~R_2^u < \left( {1 - \tau } \right)C\left( {\left| {{h_{22}}} \right|^2}{P_{R2}} \right)\\
\label{bcop_5}  R_1^u + R_2^u < \left( {1 - \tau } \right)C\left({\left| {{h_{12}}} \right|^2}{P_{R1}} + {\left| {{h_{22}}} \right|^2}{P_{R2}} \right)
\end{numcases}
\end{subequations}

\medskip

\begin{proof}
Let us focus on the communication of U$1$ in phase 1; the other user is treated similarly. In (\ref{M^2}) and (\ref{M^3}) we set $i=1$ and $j=2$. U$1$ does not need to receive $x_{B2}$ and therefore the relay RS$1$ will only decode $x_{B2}$ if it helps to obtain higher rates $(R_1^u, R_1^d)$ for the data flows supported through the RS$1$. $M_{R1}^2$ is the rate region that is obtained when RS$1$ receives $x_1$ and $x_{B1}$ over a MA channel, while treating $x_{B2}$ as noise. Note that $M_{R1}^2$ is a four-dimensional region in which no constraints are put on $(R_2^u, R_2^d)$, which means that they can have arbitrary non-negative values. This is made for the sake of consistent notation; the actual  upper limits on $(R_2^u, R_2^d)$ will be set by intersecting the regions in (\ref{defineD}). If we put the constraint that RS$1$ should decode $x_{B2}$, then we need to consider a three-user MA channel at RS$1$ with the signals $x_1, x_{B1}, x_{B2}$. The 4-dimensional rate region is given by (\ref{M^3}) where $R_2^u$ is unconstrained.
The union $M_{Ri}^2 \cup M_{Ri}^3$ is a four-dimensional region $D_{R1}^{\left( 1 \right)}$. The projection of $D_{R1}^{\left( 1 \right)}$ on the plane $(R_1^u, R_1^d)$ defines all possible rate pairs that can be decoded at the RS1. Clearly, some rate pairs are decodable when $x_{B2}$ is decoded, others when it is treated as noise. Using the similar analysis, we obtain the MA region at RS$2$ in phase 1 which is $D_{R2}^{\left( 1 \right)}$.

The 4-dimensional rate region that describes phase 1, including the operation at both RS1 and RS2, is denoted by ${D^{\left( 1 \right)}}$. Note that, for some points in $D_{R1}^{\left( 1 \right)}$, the values of $R_2^u$ or $R_2^d$ can be arbitrarily large; and the same is valid for the values of $R_1^u$ or $R_1^d$ in $D_{R2}^{\left( 1 \right)}$. Nevertheless, due to intersection, ${D^{\left( 1 \right)}}$ is set of non-negative coordinate points in which each coordinate has an upper bound.

We use  ${D^{\left( 2 \right)}}$ to denote 4-dimensional rate region that is achievable during the second phase. The proof of ${D^{\left( 2 \right)}}$ is provided in Appendix A. The rate region
$D$ is the intersection of the 4-dimensional rate region in phase 1 and phase 2, as indicated by (\ref{defineD1}). Note that the region coming from the intersection of different convex regions is also convex. However the the region coming from the union of different convex regions may not be convex, such that $D$ may be non-convex. Therefore, the achievable rate region is the convex closure of $D$.
\end{proof}

\begin{remark}
In phase 2 there are two BC processes from RS$1$ and RS$2$. The BC destinations are U$1$, U$2$ and BS which all have  side information. These two BC processes intersect at BS and become a MA process to the BS. A similar scenario is introduced in \cite{Oechtering} and \cite{KramerBC} which give an optimal BC strategy to two terminals where side information is available. However, \cite{Oechtering} and \cite{KramerBC} do not deal with a MA channel with side information and, to the best of our knowledge, there is no proved optimal BC strategy for this scenario. Phase 2 of the transmission scheme in this paper is the MA extension of the schemes in \cite{Oechtering} and \cite{KramerBC}.
\end{remark}

\section{Achievable rate regions of four-phase four-way relaying}\label{twowayscheme}
In this paper, we view the four-phase relaying scheme as the reference scheme. This is depicted on Fig.~\ref{twoway:a}, where the four-way transmission is decomposed into two time-multiplexed two-way relaying transmissions.

As will be shown later, the expression of the achievable rate regions of this four-phase relaying scheme are independent from the time-sharing fraction (time-ratio) between the 2 two-way relay processes, for AF, and additionally for DF, the BC and MA phases. Unlike the two-phase relaying scheme, the time-ratio is an implicit variable and the achievable rate region can be obtained in closed-form.
\subsection{Amplify-and-forward}
\begin{proposition}The achievable rate region of the four-phase four-way relaying scheme using AF is characterized by ${l_1} + {l_2} < 1$, where ${l_1}$ and ${l_2}$ are defined as follow£º
\begin{subequations}
\label{L}
\begin{align}
\label{L1} &{l_1} = \max \left\{ {{{2R_1^u} \over {C\left({{{{\left| {{h_{11}}} \right|}^2}{{\left| {{h_{12}}} \right|}^2}{P_1}{P_{R1}}} \over {{{\left| {{h_{11}}} \right|}^2}{P_1} + {{\left| {{h_{12}}} \right|}^2}\left( {{P_{R1}} + {P_B}} \right) + 1}} \right)}},{{2R_1^d} \over {C\left({{{{\left| {{h_{11}}} \right|}^2}{{\left| {{h_{12}}} \right|}^2}{P_B}{P_{R1}}} \over {{{\left| {{h_{12}}} \right|}^2}{P_B} + {{\left| {{h_{11}}} \right|}^2}\left( {{P_{R1}} + {P_1}} \right) + 1}} \right)}}} \right\}\\
\label{L2} &{l_2} = \max \left\{ {{{2R_2^u} \over {C\left({{{{\left| {{h_{21}}} \right|}^2}{{\left| {{h_{22}}} \right|}^2}{P_2}{P_{R2}}} \over {{{\left| {{h_{21}}} \right|}^2}{P_2} + {{\left| {{h_{22}}} \right|}^2}\left( {{P_{R2}} + {P_B}} \right) + 1}} \right) }},{{2R_2^d} \over {C\left({{{{\left| {{h_{21}}} \right|}^2}{{\left| {{h_{22}}} \right|}^2}{P_B}{P_{R2}}} \over {{{\left| {{h_{22}}} \right|}^2}{P_B} + {{\left| {{h_{21}}} \right|}^2}\left( {{P_{R2}} + {P_2}} \right) + 1}}\right)}}} \right\}.
\end{align}
\end{subequations}

\end{proposition}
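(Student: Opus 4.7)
The plan is to treat the four-phase scheme as a time-multiplexing of two completely independent two-way AF relaying sessions, one between U$1$ and BS through RS$1$ (phases 1--2), the other between U$2$ and BS through RS$2$ (phases 3--4). Because the two sessions use disjoint relays and disjoint time slots, no interference couples them and their rate regions can be derived in isolation; the only shared resource is the total channel use, and that is captured by a single time-sharing parameter. First I would fix the fraction of total time allocated to session $k$ as $t_k\in[0,1]$, with $t_1+t_2\le 1$, and argue that within session $k$ the MA phase and the BC phase are \emph{forced} to have equal duration: in AF the relay outputs one amplified symbol per received symbol, so the number of channel uses in the two phases must coincide. Hence each of the two phases of session $k$ occupies a fraction $t_k/2$ of the total time.

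Next I would derive the equivalent end-to-end AWGN channels for session 1 (session 2 is identical after relabeling). In the MA phase RS$1$ receives $y_{R1}=h_{11}x_1+h_{12}x_B+z_{R1}$ with input powers $P_1$ and $P_B$, so the AF scaling is $\beta_1^2=P_{R1}/(|h_{11}|^2P_1+|h_{12}|^2P_B+1)$. After RS$1$ transmits $x_{R1}=\beta_1 y_{R1}$, the BS subtracts the contribution of its own signal $x_B$ from the observation $y_B=h_{12}x_{R1}+z_B$, and U$1$ subtracts the contribution of $x_1$ from $y_1=h_{11}x_{R1}+z_1$; this is the standard self-interference removal step used in the two-phase proof above. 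A direct computation of signal-to-noise ratio for each resulting point-to-point AWGN channel then yields
\[
\mathrm{SNR}_{1}^{u}=\frac{|h_{11}|^2|h_{12}|^2P_1P_{R1}}{|h_{11}|^2P_1+|h_{12}|^2(P_{R1}+P_B)+1},\qquad
\mathrm{SNR}_{1}^{d}=\frac{|h_{11}|^2|h_{12}|^2P_BP_{R1}}{|h_{12}|^2P_B+|h_{11}|^2(P_{R1}+P_1)+1},
\]
which are exactly the two SNRs inside the $C(\cdot)$'s in \eqref{L1}; the analogous computation yields the SNRs appearing in \eqref{L2}.

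Combining these two ingredients, the achievable rates for session 1 are $R_1^u<(t_1/2)\,C(\mathrm{SNR}_1^u)$ and $R_1^d<(t_1/2)\,C(\mathrm{SNR}_1^d)$, and these two constraints are \emph{simultaneously} satisfiable for any non-negative pair, since uplink and downlink use the same block of $t_1/2+t_1/2$ channel uses but with independent Gaussian codebooks at U$1$ and BS. Solving each constraint for $t_1$ gives $t_1>2R_1^u/C(\mathrm{SNR}_1^u)$ and $t_1>2R_1^d/C(\mathrm{SNR}_1^d)$, so the smallest admissible value is $t_1>l_1$ as defined in \eqref{L1}; analogously $t_2>l_2$. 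The time-sharing constraint $t_1+t_2\le 1$ is therefore compatible with some valid choice of $(t_1,t_2)$ iff $l_1+l_2<1$, which is exactly the claimed characterization. For the converse direction (showing the region cannot be strictly larger under this scheme), I would observe that the argument is tight: any achievable rate tuple must come from some $(t_1,t_2)$ satisfying $t_k\ge l_k$, and the sum inequality is immediate.

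The proof is largely mechanical and the main subtlety is to justify cleanly that (i) the two sessions decouple despite sharing the BS, which follows from the fact that RS$1$ and RS$2$ are active in disjoint phases and the BS can perform separate encoding/decoding in each session, and (ii) the equal split between MA and BC within a session is forced by AF, so that the time-sharing parameter collapses to a single scalar $t_k$ per session. Once these two points are established, the elimination of $t_1$ and $t_2$ through the inequality $l_1+l_2<1$ is a one-line calculation, which explains why, as announced just before the proposition, the rate region admits a closed form that is free of time-sharing variables.
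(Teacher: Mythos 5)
Your proposal is correct and follows essentially the same route as the paper: decompose the scheme into two time-multiplexed two-way AF sessions with a single time-sharing variable, impose the per-session constraints $R_k^u,R_k^d < (t_k/2)C(\cdot)$, and eliminate the time fractions to get $l_1+l_2<1$. The only difference is that you re-derive the equivalent end-to-end SNRs from the AF amplification and self-interference cancellation, whereas the paper simply cites the known two-way AF relaying results of Kim et al. for those constraints.
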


\begin{proof}
If the relay stations RS$1$ and RS$2$ use AF, then the duration of phase 1 (3) is equal to the duration of phase 2 (4).  Assuming that the total duration of all the phases is equal to unity, then phase 1 and phase 2 together occupy the time ratio $\eta$, while phase 3 and phase 4 is ${1{\rm{ - }}\eta }$. From \cite{Kim}, we get the constraints for phase 1 and phase 2
\setcounter{equation}{25}
\begin{subequations}
\label{AF_fours1}
\begin{numcases}{}
\label{AF_four1}  R_1^u < \frac{1}{2}\eta C\left({{{{\left| {{h_{11}}} \right|}^2}{{\left| {{h_{12}}} \right|}^2}{P_1}{P_{R1}}} \over {{{\left| {{h_{11}}} \right|}^2}{P_1} + {{\left| {{h_{12}}} \right|}^2}\left( {{P_{R1}} + {P_B}} \right) + 1}} \right)\\
\label{AF_four2}  R_1^d < \frac{1}{2}\eta C\left({{{{\left| {{h_{11}}} \right|}^2}{{\left| {{h_{12}}} \right|}^2}{P_B}{P_{R1}}} \over {{{\left| {{h_{12}}} \right|}^2}{P_B} + {{\left| {{h_{11}}} \right|}^2}\left( {{P_{R1}} + {P_1}} \right) + 1}} \right).
\end{numcases}
\end{subequations}
Similarly, the constraints for phase 3 and phase 4 can be obtained as
\begin{subequations}
\label{AF_fours2}
\begin{numcases}{}
  \label{AF_four3} R_2^u < \frac{{1{\rm{ - }}\eta }}{2}C\left({{{{\left| {{h_{21}}} \right|}^2}{{\left| {{h_{22}}} \right|}^2}{P_2}{P_{R2}}} \over {{{\left| {{h_{21}}} \right|}^2}{P_2} + {{\left| {{h_{22}}} \right|}^2}\left( {{P_{R2}} + {P_B}} \right) + 1}} \right)\\
\label{AF_four4}  R_2^d < \frac{{1{\rm{ - }}\eta }}{2}C\left({{{{\left| {{h_{21}}} \right|}^2}{{\left| {{h_{22}}} \right|}^2}{P_B}{P_{R2}}} \over {{{\left| {{h_{22}}} \right|}^2}{P_B} + {{\left| {{h_{21}}} \right|}^2}\left( {{P_{R2}} + {P_2}} \right) + 1}}\right).
\end{numcases}
\end{subequations}


Using the definition of $l_1$ and $l_2$ in (\ref{L1}) and (\ref{L2}), we can write (\ref{AF_fours1}) and (\ref{AF_fours2}) as
\begin{equation}
{l_1} < \eta, ~~~~{l_2} < 1{\rm{ - }}\eta.
\label{regionl}
\end{equation}

The inequalities in (\ref{regionl}) are equivalent to ${l_1} + {l_2} < 1$ which is the closed-form expression of the achievable rate region for two-way AF relaying scheme.
\end{proof}

\subsection{Decode-and-forward}
\begin{proposition}The achievable rate region of the four-phase four-way relaying scheme using DF is characterized by ${k_1} + {k_2}+{k_3} + {k_4} < 1$.
\begin{subequations}
\label{K}
\begin{align}
\label{k1} &{k_1} = \max \left\{ {{{R_1^u} \over {C\left({\left| {{h_{11}}} \right|^2}{P_1}\right)}},{{R_1^d} \over {C\left({\left| {{h_{12}}} \right|^2}{P_B}\right)}},{{R_1^u + R_1^d} \over {C\left({\left| {{h_{11}}} \right|^2}{P_1} + {\left| {{h_{12}}} \right|^2}{P_B}\right)}}} \right\} \\
\label{k2} &{k_2} = \max \left\{ {{{R_1^d} \over {C\left({\left| {{h_{11}}} \right|^2}{P_{R1}}\right)}},{{R_1^u} \over {C\left({\left| {{h_{12}}} \right|^2}{P_{R1}}\right)}}} \right\}\\
\label{k3} &{k_3} = \max \left\{ {{{R_2^u} \over {C\left({\left| {{h_{21}}} \right|^2}{P_2}\right)}},{{R_2^d} \over {C\left({\left| {{h_{22}}} \right|^2}{P_B}\right)}},{{R_1^u + R_1^d} \over {C\left({\left| {{h_{21}}} \right|^2}{P_2} + {\left| {{h_{22}}} \right|^2}{P_B}\right)}}} \right\}\\
\label{k4} &{k_4} = \max \left\{ {{{R_2^d} \over {C\left({\left| {{h_{21}}} \right|^2}{P_{R2}}\right)}},{{R_2^u} \over {C\left({\left| {{h_{22}}} \right|^2}{P_{R2}} \right)}}} \right\}.
\end{align}
\end{subequations}

\end{proposition}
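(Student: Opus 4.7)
The plan is to treat the four-phase scheme as four independent point-to-point-style subproblems (two MA phases and two BC phases with side information), allocate a time fraction $t_i \geq 0$ to phase $i$ with $t_1+t_2+t_3+t_4 \le 1$, bound the rates in each phase using standard results, and then eliminate the $t_i$'s to obtain the compact form $k_1+k_2+k_3+k_4<1$.

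First I would analyze phase 1, the MA channel at RS$1$ with U$1$ and BS as transmitters. Since RS$1$ operates in DF, the decoder only needs to recover $x_1$ and $x_{B1}$; the standard two-user Gaussian MAC capacity region, scaled by the time fraction $t_1$ used for phase 1, yields the three inequalities $R_1^u < t_1 C(|h_{11}|^2 P_1)$, $R_1^d < t_1 C(|h_{12}|^2 P_B)$, and $R_1^u+R_1^d < t_1 C(|h_{11}|^2 P_1+|h_{12}|^2 P_B)$. Solving each for $t_1$ and taking the maximum gives the necessary condition $t_1 \geq k_1$, exactly matching (\ref{k1}). A symmetric argument on the MA channel at RS$2$ (phase 3) gives $t_3 \geq k_3$ as in (\ref{k3}).

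Next I would handle phase 2, the broadcast from RS$1$ to U$1$ and BS, where U$1$ has side information $x_1$ and BS has side information $x_{B1}$. Using the optimal two-way relaying broadcast strategy of \cite{Oechtering}/\cite{KramerBC} (network coding so that each receiver, knowing its own message, subtracts it off and faces an interference-free Gaussian channel), the achievable region scaled by $t_2$ is $R_1^d < t_2 C(|h_{11}|^2 P_{R1})$ (to U$1$) and $R_1^u < t_2 C(|h_{12}|^2 P_{R1})$ (to BS). Inverting for $t_2$ yields $t_2 \geq k_2$ as in (\ref{k2}), and the same reasoning for phase 4 yields $t_4 \geq k_4$.

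Finally I would combine the four per-phase requirements with the total time budget $t_1+t_2+t_3+t_4 \le 1$ to obtain $k_1+k_2+k_3+k_4 < 1$, which is the claimed characterization; conversely, given any rate tuple satisfying the strict inequality one can choose $t_i=k_i+\epsilon_i$ with small $\epsilon_i \geq 0$ summing to at most $1$ and achieve the rates, proving achievability. The main conceptual point, rather than an obstacle in calculation, is justifying that phases 2 and 4 really attain the simple product-form region, i.e., invoking the optimal broadcast-with-side-information result of \cite{Oechtering,KramerBC}; once that is taken as given, the rest is an elementary time-allocation argument and the closed-form independence from the time-sharing fractions mentioned in the preceding paragraph follows automatically.
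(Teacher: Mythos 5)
Your proposal follows essentially the same route as the paper: the paper allocates time fractions $\tau_1,\dots,\tau_4$ to the four phases, applies the standard Gaussian MAC region at RS$1$ and RS$2$ for phases 1 and 3 and the broadcast-with-side-information result of \cite{Oechtering} for phases 2 and 4, and then eliminates the time fractions via $\tau_1+\tau_2+\tau_3+\tau_4=1$ to obtain $k_1+k_2+k_3+k_4<1$, exactly as you do. The only incidental point is that your symmetric treatment of phase 3 implicitly uses the sum-rate constraint on $R_2^u+R_2^d$, i.e., it corrects the numerator $R_1^u+R_1^d$ appearing in (\ref{k3}).
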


\begin{proof}
When the two-way DF relaying scheme is used in the four-way relay cellular networks, the time duration of the 4 phases can be different from each other. We use ${\tau _1},{\tau _2},{\tau _3},{\tau _4}$ to denote the time ratios spent by phase 1,2,3,4 respectively. Notice that, ${\tau _1} +  {\tau _2} + {\tau _3} + {\tau _4} = 1$ must be satisfied. From \cite{Gamal}, we get the constraints on the rates for the MA process in phase 1,
\begin{equation}
\label{DF_four1}
R_1^u < {\tau _1}C\left({\left| {{h_{11}}} \right|^2}{P_1}\right),~~ R_1^d < {\tau _1}C\left({\left| {{h_{12}}} \right|^2}{P_B}\right),~~  R_1^u + R_1^d < {\tau _1}C\left({\left| {{h_{11}}} \right|^2}{P_1} + {\left| {{h_{12}}} \right|^2}{P_B}\right)
\end{equation}
The second phase is the BC process with side information to U$1$ and BS as U$1$ and BS both know what they sent in phase 1. From \cite{Oechtering}, the constraints on the rates for the broadcast process from RS$1$ can be obtained as
\begin{equation}
\label{DF_four2}
R_1^d < {\tau _2}C\left({\left| {{h_{11}}} \right|^2}{P_{R1}}\right),~~R_1^u < {\tau _2}C\left({\left| {{h_{12}}} \right|^2}{P_{R1}}\right).
\end{equation}

Similarly, we can obtain the constraints for the MA and BC process at RS$2$ in phase 3 and phase 4,
\vspace{-1pt}
\begin{subequations}
\label{DF_four4}
\begin{numcases}{}
\label{DF_four3_1}  R_2^u < {\tau _3}C\left({\left| {{h_{21}}} \right|^2}{P_2}\right),~~R_2^d < {\tau _3}C\left({\left| {{h_{22}}} \right|^2}{P_B}\right)\\
\label{DF_four3_3}  R_2^u + R_2^d < {\tau _3}C\left({\left| {{h_{21}}} \right|^2}{P_2} + {\left| {{h_{22}}} \right|^2}{P_B}\right)\\
\label{DF_four4_1}  R_2^d < {\tau _4}C\left({\left| {{h_{21}}} \right|^2}{P_{R2}}\right),~~R_2^u < {\tau _4}C\left({\left| {{h_{22}}} \right|^2}{P_{R2}} \right).
\end{numcases}
\end{subequations}

Applying (\ref{K}) into (\ref{DF_four1}), (\ref{DF_four2}) and (\ref{DF_four4}), we get
\begin{equation}
{k_i} < {\tau _i}, i=1,2,3,4
\label{tao}.%
\end{equation}

Incorporating ${\tau _1} +  {\tau _2} + {\tau _3} + {\tau _4} = 1$ into (\ref{tao}), we obtain ${k_1} + {k_2}+{k_3} + {k_4} < 1$ which is the closed-form expression of the achievable rate region for two-way DF relaying scheme.
\end{proof}

\section{Achievable rates with fixed Downlink-Uplink rate ratio}
\label{num}

The achievable rate regions described in section \ref{newscheme} and \ref{twowayscheme} have four dimensions. In order to get a better insight into the achievable rates, we impose a ratio between the uplink and downlink rates. It is practical to fix the downlink-uplink rate ratio for a given type of application e. g. gaming and calls have ratio of 1:1, web browsing has a ratio of about 5:1 \cite{Truong}. We assume that, for the $i-$th user, the downlink rate demand $R_i^d$ is related to the uplink rate demand $R_i^u$, as $R_i^d = {\theta _i}R_i^u$, see \cite{Sungyeon}. Applying the downlink-uplink rate ratio, the dimension of the achievable rate region degrades to 2. In this section, for simplicity, we plot the achievable rate regions of the rate pair $(R_1^u,R_2^u)$ and assume each node has equal transmission power which is ${P_j}=10, j \in \{ 1,R1,B,R2,2\}$ .

Notice that, the achievable rate regions of the four-phase relaying scheme are closed-form and independent from the time ratios as proposition 3 and 4 show. However, the achievable rate region of two-phase AF relaying scheme is related to the superposition ratio as proposition 1 shows. And the achievable rate region of two-phase DF relaying scheme is related to the time ratio and superposition ratio as proposition 2 shows. Then we need to optimize $\tau$  and $\alpha$  to get the envelope of the achievable rate regions for two-phase scheme.


In the optimization, we fix the rate $R_1^u$ of user 1 to a set value $r_1$ and find the maximal rate $R_2^u$ of user 2. The parameter $r_1$ describes the feasible rate values for user 1, ${r_1} \in \left[ {0, R_{1\max }^u} \right]$ where $R_{1\max }^u$ is determine via another optimization.
The achievable rate regions of the two-phase scheme can be obtained by a two-step optimization. Here we only present the optimization method for DF case, the AF case follows the similar way. The first step is to find $R_{1\max }^u = \max \left\{ {R_1^u} \right\}$, as shown below:
\begin{equation}
\begin{gathered}
  \mathop {\max}\limits_{0 \leqslant \tau  \leqslant 1,0 \leqslant \alpha  \leqslant 1}    ~~R_1^u,~~~~~~
  s.t.~ (\ref{defineD})~{\rm{and}}~{R_2^u} = 0,~{\rm{given}}~{\theta _1},{\theta _2}.
\label{eq1}
\end{gathered}
\end{equation}

Then, finding the achievable rate region is equivalently to solve the following optimization problem for each feasible rate point ${r_1} \in \left[ {0, R_{1\max }^u} \right]$ :
\begin{equation}
\begin{gathered}
  \mathop {\max}\limits_{0 \leqslant \tau  \leqslant 1,0 \leqslant \alpha  \leqslant 1}    ~~{R_2^u},~~~~~~
  s.t.~(\ref{defineD})~{\rm{and}}~{R_1^u} = {r_1},~{\rm{given}}~{\theta _1},{\theta _2}.
\label{eq2}
\end{gathered}
\end{equation}

The formulas (\ref{eq1}) and  (\ref{eq2}) are two conventional constrained nonlinear optimization problems for which a numerical solution for the achievable rate region can be easily found.

\section{Numerical results}\label{numerical}
The achievable rate regions of rate pair $(R_1^u,R_2^u)$ are presented in Fig.~\ref{st1}-Fig.~\ref{st6}. S2 stands for two-phase relaying scheme, S4 stands for four-phase relaying scheme. Here we only deal with six typical cases. The six typical cases are divided into two groups of three cases, where the groups have symmetric and asymmetric downlink-uplink ratio, respectively.

When the data rate is symmetric, ${\theta _1} = {\theta _2} = 1$, we focus on the performance under different SNR  conditions as Fig.~\ref{st1}-Fig.~\ref{st3} show. We consider three cases. The first case is when the links have the same SNR, that is ${\left| {{h_{11}}} \right|^2} = {\left| {{h_{12}}} \right|^2} = {\left| {{h_{22}}} \right|^2} = {\left| {{h_{21}}} \right|^2} = 1$. The second case is when the links that are direct to the BS have a lower SNR than the links that are direct to the users, that is ${\left| {{h_{11}}} \right|^2} = {\left| {{h_{21}}} \right|^2} = 1,{\left| {{h_{12}}} \right|^2} = {\left| {{h_{22}}} \right|^2} =0.1$. The third case is when the links that are direct to the BS have a higher SNR than the links  that are direct to the users, that is ${\left| {{h_{11}}} \right|^2} = {\left| {{h_{21}}} \right|^2} =0.1,{\left| {{h_{12}}} \right|^2} = {\left| {{h_{22}}} \right|^2} =1$.

When the data rate is asymmetric, ${\theta _1} \ne 1,{\theta _2} \ne 1$, we fix the SNR as ${\left| {{h_{11}}} \right|^2} = {\left| {{h_{12}}} \right|^2} = {\left| {{h_{22}}} \right|^2} = {\left| {{h_{21}}} \right|^2} = 1$ and focus on the performance under different downlink-uplink rate ratio configurations as Fig.~\ref{st4}-Fig.~\ref{st6} show. We again consider three cases.  The first case is when the downlink date rate is smaller than the uplink date rate for both U$1$ and U$2$, that is ${\theta _1}={\theta _2} =0.5$. The second case is when the downlink date rate is larger than the uplink date rate for both U$1$and U$2$, that is ${\theta _1} = {\theta _2} = 2$. The third case is the downlink data rate is larger than the uplink date rate for U$1$ and the downlink data rate is smaller than the uplink date rate for U$2$, that is ${\theta _1}{\rm{ = 2,}}~{\theta _2} = 0.5$.

For the same relaying scheme (S2 or S4), the rate regions of DF are always larger AF. The main reason is that, in DF, the noise is removed at the relay stations and not forwarded.

Consider the points in the axes where $R_1^u = 0$ or $R_2^u = 0$. Only one user communicates with BS. Therefore the four-way communication degrades to two-way communication. This explains why two-phase DF relaying (S2) and four-phase DF relaying (S4) coincide on the axes. However, there is a MA process to BS in the two-phase DF relaying scheme, that the four-phase DF relaying scheme does not have. Because the MA process can enlarge the sum rate, $\max \left\{ {R_1^u + R_2^u} \right\}$ of the two-phase DF relaying scheme is larger than the four-phase DF relaying scheme. Note that, the shape of the rate region must be convex. Two-phase DF relaying scheme and four-phase DF relaying scheme have the same axes points of the rate regions. However, two-phase DF relaying scheme has larger $\max \left\{ {R_1^u + R_2^u} \right\}$ than the four-phase DF relaying scheme. Conclude the above factors, the achievable rate region of two-phase DF relaying scheme is larger than the achievable rate region of four-phase DF relaying scheme.

We now consider AF relaying and focus on the point $R_2^u = 0$ for instance. For two-phase AF relaying, although BS and U$2$ has no information to communicate, RS$2$ also receives the information from BS which intends to RS$1$. In the second phase, RS$2$ forwards the information received before, along with the noise. Note that this noise will be present, amplified and forwarded even when there is no communication between BS and U$2$. This explains why the maximal value of the rate $R_1^u$ at $R_2^u = 0$  for two-phase AF relaying is lower than the one for four-phase AF relaying. It should be noted that, in some cases, the values in $(R_1^u,R_2^u)$ do not correspond to the capacities of the individual links, because $(R_1^d,R_2^d)$ can limit $(R_1^u,R_2^u)$ through downlink-uplink ratio. For the above situation, the axes points of two-phase AF relaying and four-phase AF relaying can be the same as Fig.~\ref{st2}, Fig.~\ref{st5} and Fig.~\ref{st6} show.

In two-phase AF relaying there are more MA processes activated simultaneously than four-phase AF relaying. And the MA process can enlarge the sum rate. This implies that, in most cases the sum-rate of the two-phase AF relaying scheme can be better than the four-phase AF relaying scheme, as Fig.~\ref{st3}, Fig.~\ref{st4} and Fig.~\ref{st6} show.
\begin{figure}[htbp]
  \begin{minipage}[t]{0.52\linewidth}
    \includegraphics[width=7 cm]{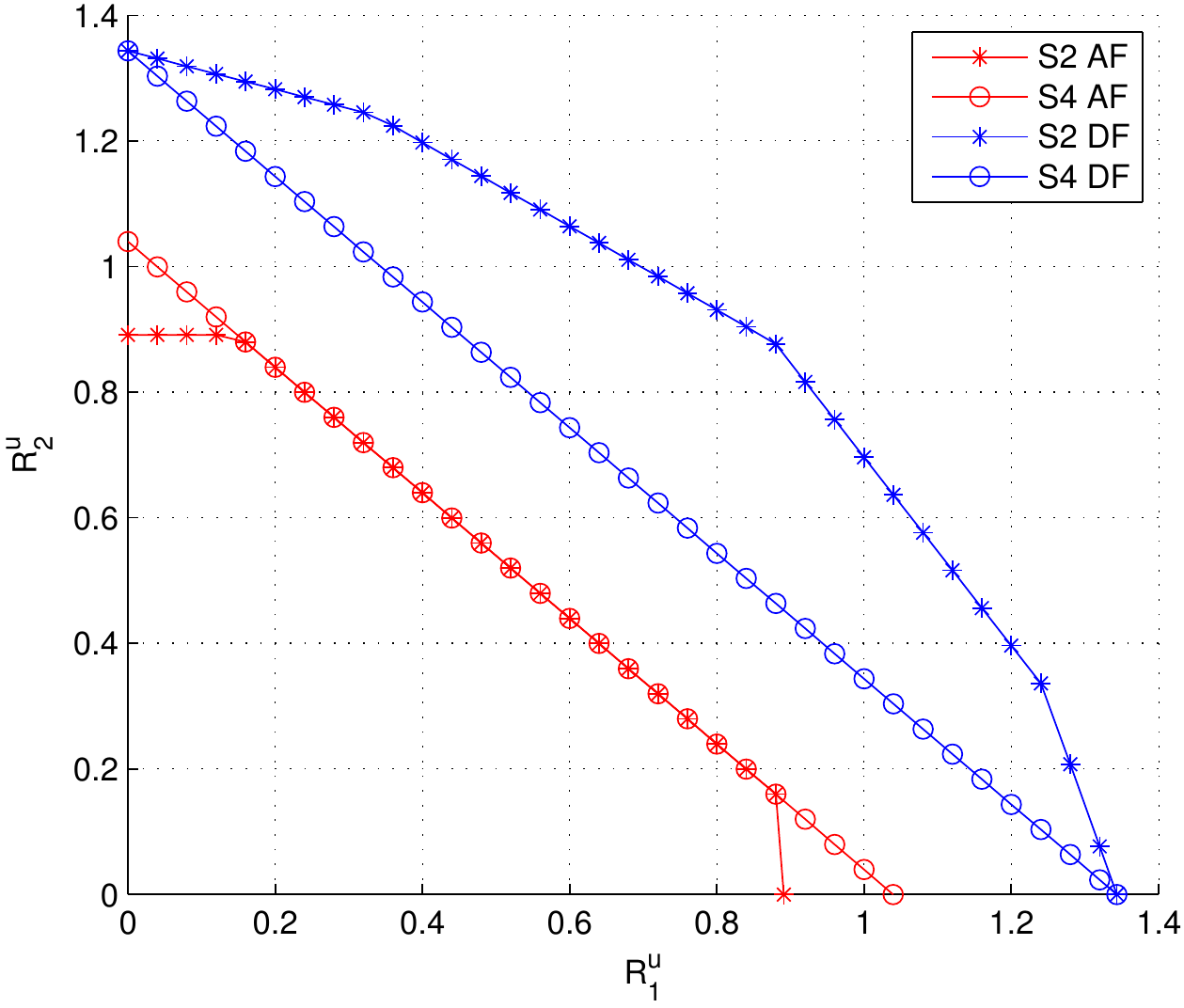}
  \end{minipage}%
  \begin{minipage}[t]{0.52\linewidth}
    \includegraphics[width=7 cm]{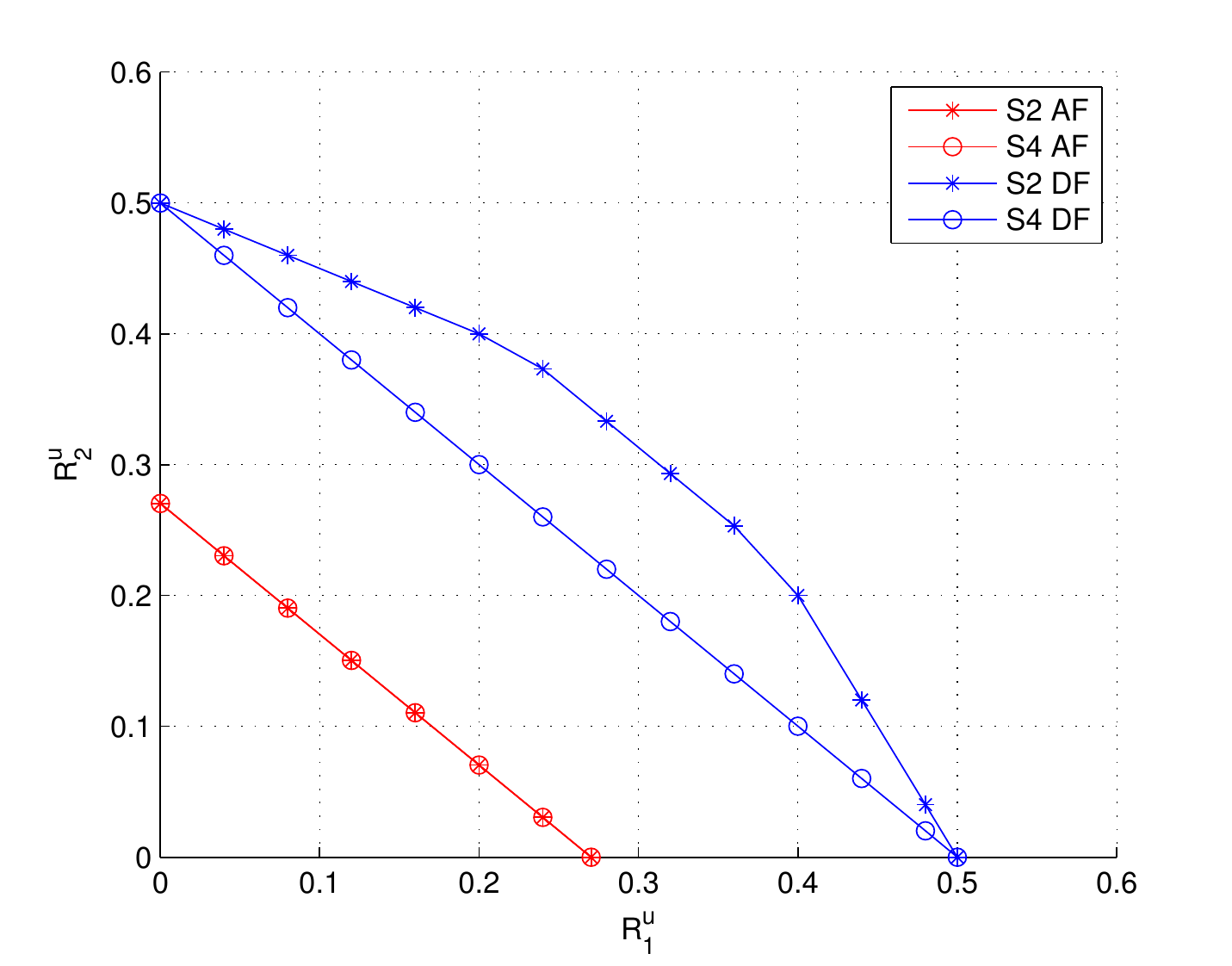}
  \end{minipage}\\[-10pt]
    \begin{minipage}[t]{.42\linewidth}
   \caption{Achievable rate region for the pair of
uplink rates, assuming a downlink-uplink ratio of ${\theta _1}{\rm{ = }}{\theta _2}{\rm{ = 1}}$ and channel values of ${\left| {{h_{11}}} \right|^2} = {\left| {{h_{12}}} \right|^2} = {\left| {{h_{22}}} \right|^2} = {\left| {{h_{21}}} \right|^2} = 1$.}
     \label{st1}
    \end{minipage}%
  \hfill%
    \begin{minipage}[t]{.42\linewidth}
      \caption{Achievable rate region for the pair of
uplink rates, assuming a downlink-uplink ratio of ${\theta _1}{\rm{ = }}{\theta _2}{\rm{ = 1}}$ and channel values of ${\left| {{h_{11}}} \right|^2} = {\left| {{h_{21}}} \right|^2} = 1,{\left| {{h_{12}}} \right|^2} = {\left| {{h_{22}}} \right|^2} =0.1$.}
       \label{st2}
    \end{minipage}%
\end{figure}
\begin{figure}[htpb]
  \begin{minipage}[t]{0.52\linewidth}
    \includegraphics[width=7 cm]{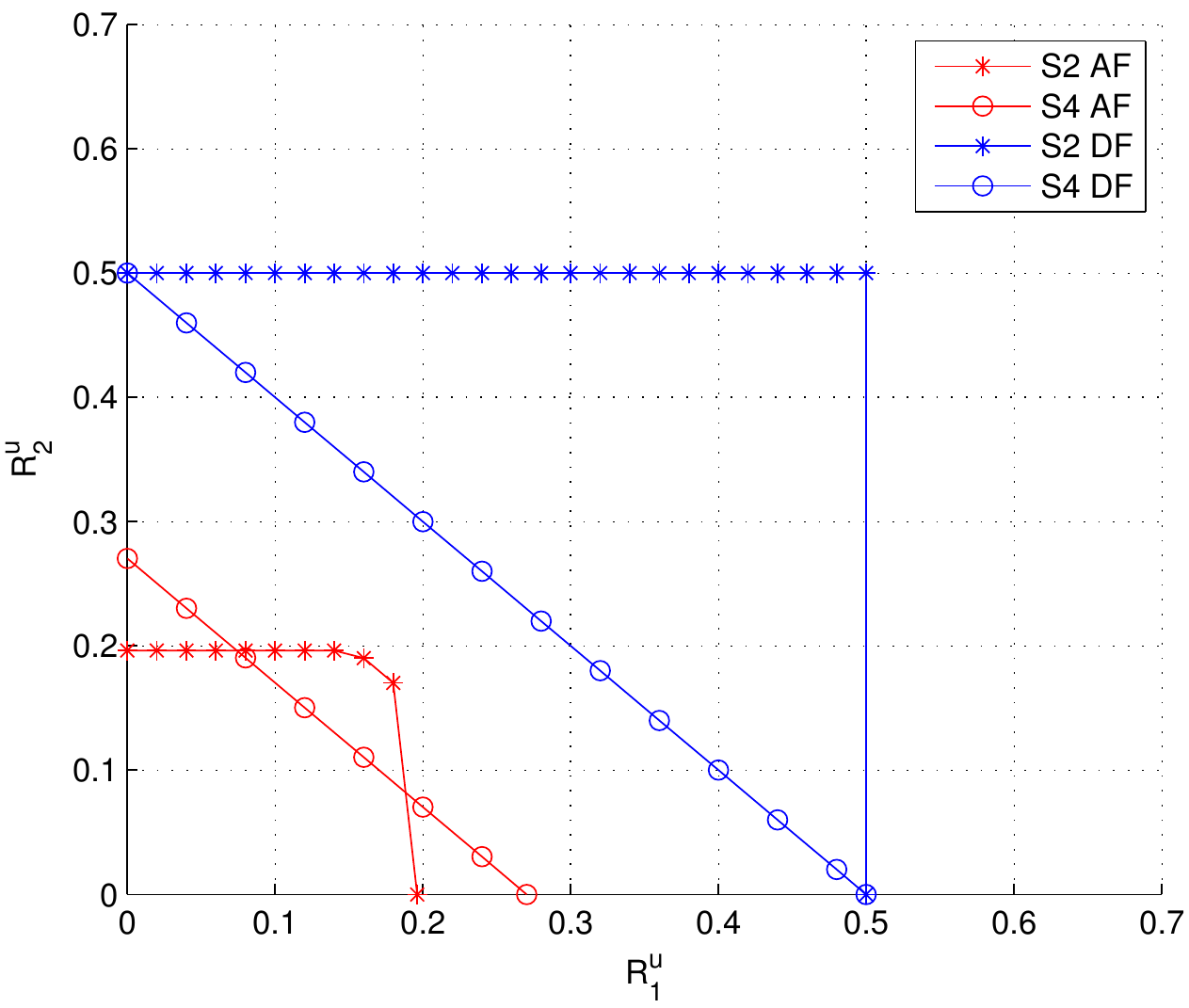}
  \end{minipage}%
  \begin{minipage}[t]{0.52\linewidth}
    \includegraphics[width=7 cm]{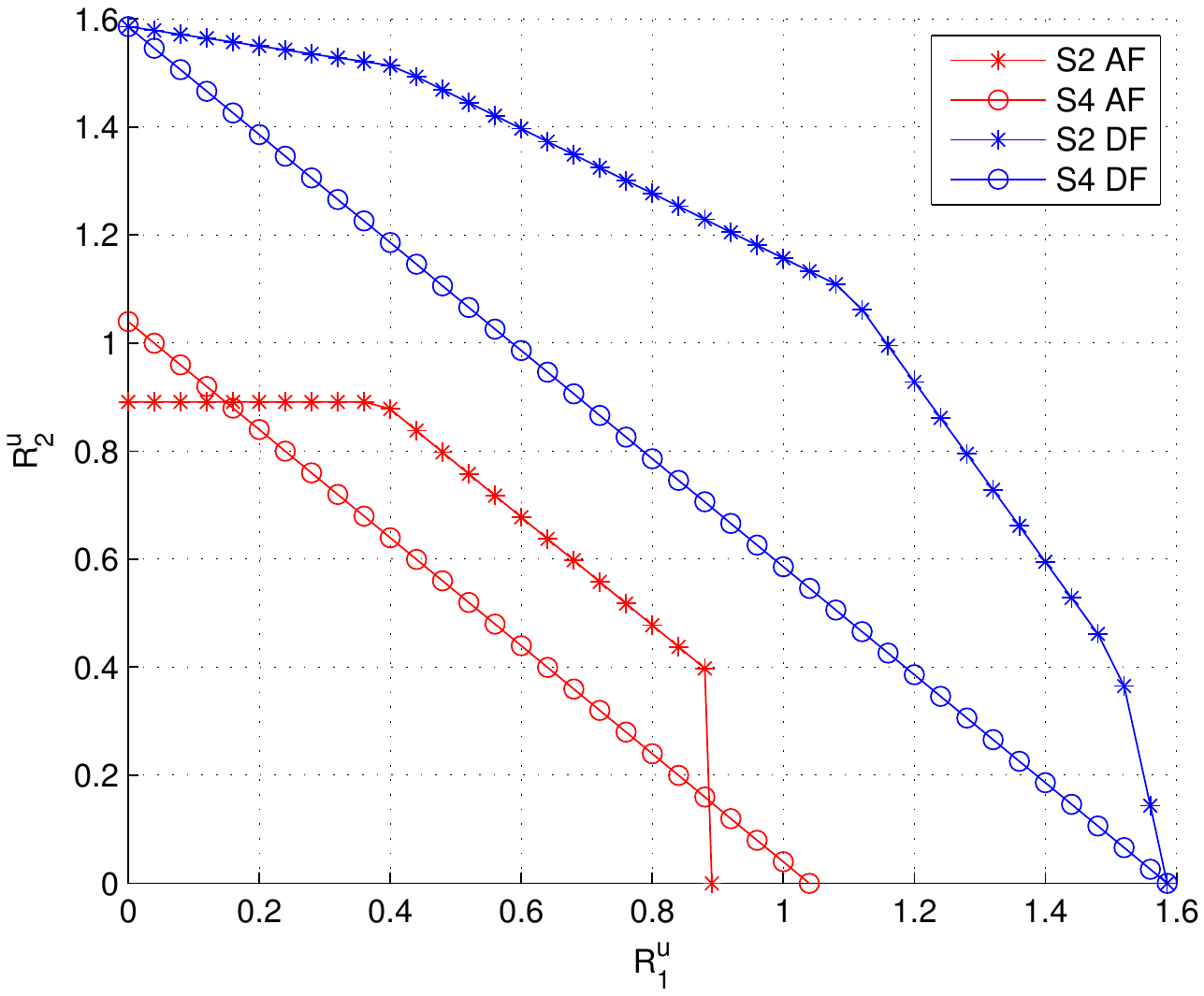}
  \end{minipage}\\[-10pt]
    \begin{minipage}[t]{.42\linewidth}
   \caption{Achievable rate region for the pair of
uplink rates, assuming a downlink-uplink ratio of ${\theta _1}{\rm{ = }}{\theta _2}{\rm{ = 1}}$ and channel values of ${\left| {{h_{11}}} \right|^2} = {\left| {{h_{21}}} \right|^2} = 0.1,{\left| {{h_{12}}} \right|^2} = {\left| {{h_{22}}} \right|^2} = 1$. }
       \label{st3}
    \end{minipage}%
  \hfill%
    \begin{minipage}[t]{.42\linewidth}
      \caption{Achievable rate region for the pair of
uplink rates, assuming a downlink-uplink ratio of ${\theta _1}{\rm{ = }}{\theta _2}{\rm{ = 0}}{\rm{.5}}$ and channel values of  ${\left| {{h_{11}}} \right|^2} = {\left| {{h_{12}}} \right|^2} = {\left| {{h_{22}}} \right|^2} = {\left| {{h_{21}}} \right|^2} = 1$.}
         \label{st4}
    \end{minipage}%
\end{figure}
\begin{figure}[htpb]
  \begin{minipage}[t]{0.52\linewidth}
    \includegraphics[width=7 cm]{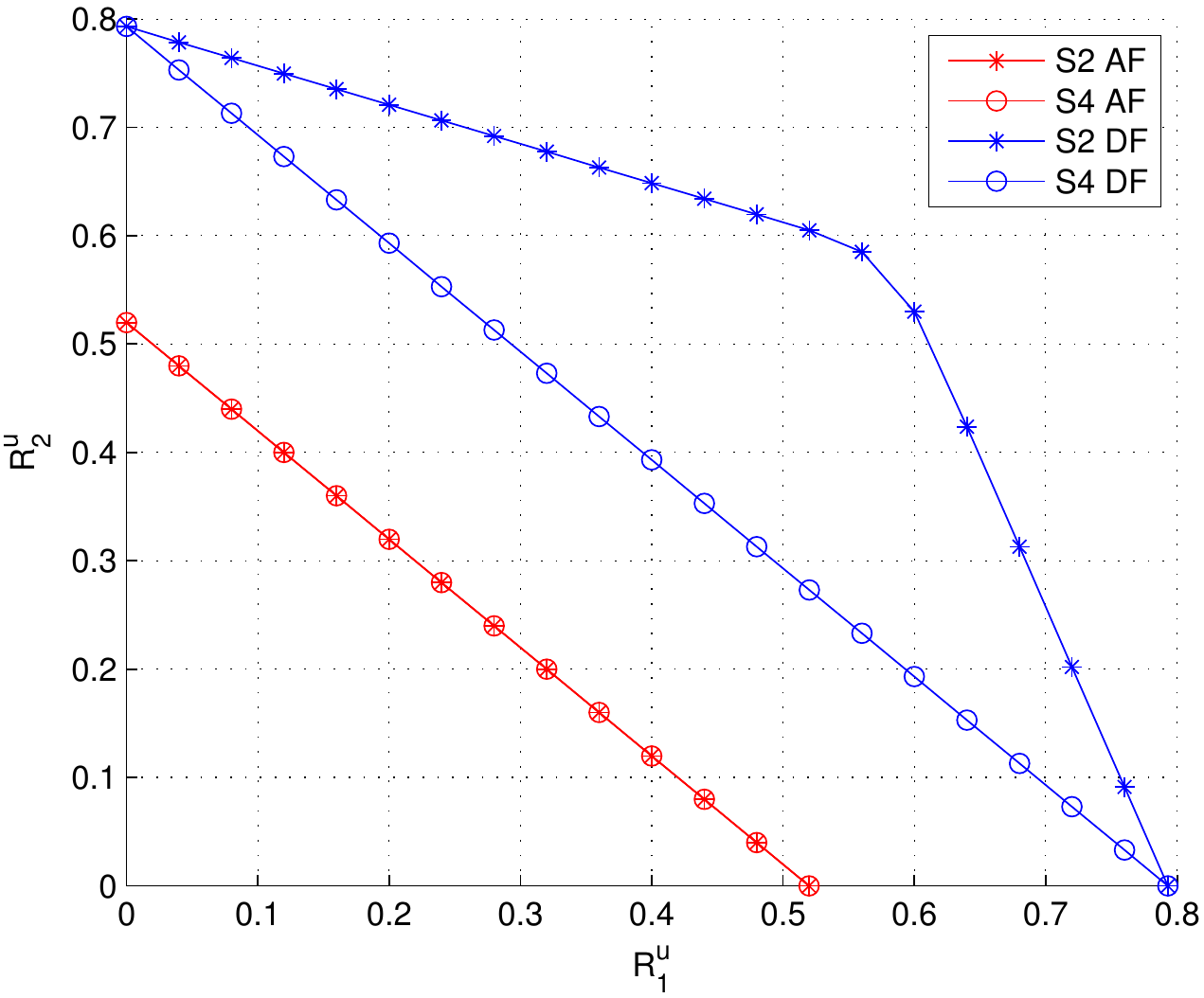}
  \end{minipage}%
  \begin{minipage}[t]{0.52\linewidth}
    \includegraphics[width=7 cm]{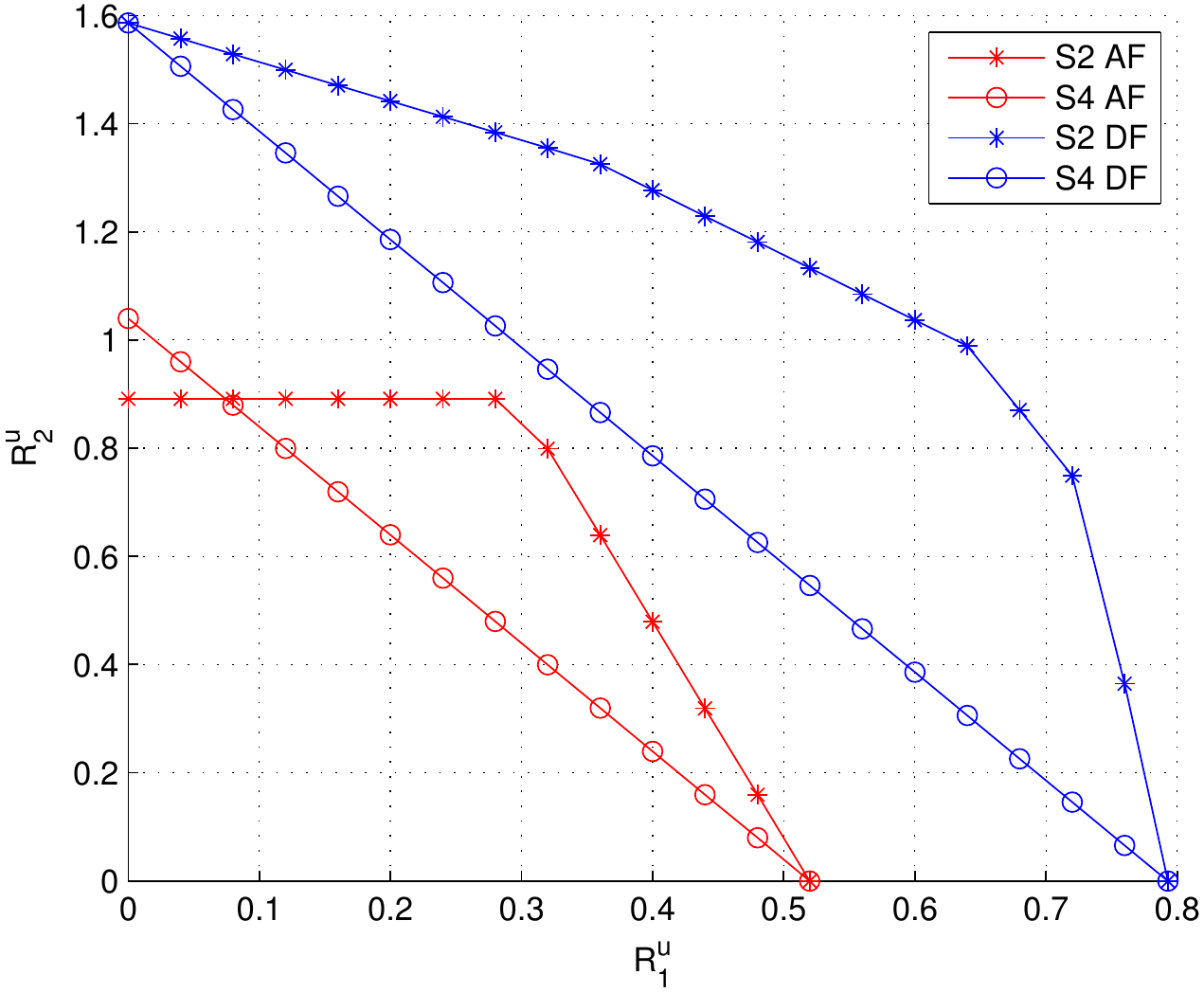}
  \end{minipage}\\[-10pt]
    \begin{minipage}[t]{.42\linewidth}
   \caption{Achievable rate region for the pair of
uplink rates, assuming a downlink-uplink ratio of ${\theta _1}{\rm{ = }}{\theta _2}{\rm{ = 2}}$ and channel values of ${\left| {{h_{11}}} \right|^2} = {\left| {{h_{12}}} \right|^2} = {\left| {{h_{22}}} \right|^2} = {\left| {{h_{21}}} \right|^2} = 1$.}
     \label{st5}
    \end{minipage}%
  \hfill%
    \begin{minipage}[t]{.42\linewidth}
      \caption{Achievable rate region for the pair of
uplink rates, assuming a downlink-uplink ratio of ${\theta _1}{\rm{ = 2,}}{\theta _2} = 0.5$ and channel values of ${\left| {{h_{11}}} \right|^2} = {\left| {{h_{12}}} \right|^2} = {\left| {{h_{22}}} \right|^2} = {\left| {{h_{21}}} \right|^2} = 1$.}
          \label{st6}
    \end{minipage}%
\end{figure}
\section{Conclusion}
We have described a new multi-way relay scenario of practical relevance in wireless cellular networks, termed four-way relaying, in which each of the two Mobile Stations (MSs) has a two-way connection to the same Base Station (BS), while each connection is through a dedicated Relay Station (RS). One of the main assumptions is that the RSs are antipodal, i. e. deployed at the ``opposite side'' of the BS, such that they are not interfering. We have proposed novel communication schemes which leverage on the ideas of wireless network coding, but are designed for the considered four-way scenario. The key idea is to have coordinated transmissions to/from the two RSs. The principle can be applied if the RS uses AF or DF relaying. The proposed communication scheme consists of two phases: broadcast and a multiple access, respectively. We compare the performance with a state-of-the-art reference scheme, time sharing is used between the two MSs, while each MS is served through a two-way relaying scheme. The results indicate that, when the RS operates in a DF mode, the achievable rate regions are significantly enlarged. On the other hand, for AF relaying, the gains are rather modest.

An interesting issue for future work is to consider other types of operation for the relay, such as physical-layer network coding, use of lattices and noisy network coding. As another line of research would be to investigate how the Base Stations and the relays should be deployed, assuming that they can use the proposed four-way relaying method.

\appendices{}

\section{}
The expression (\ref{bcop}) is an upper bound of the achievable rates of phase 2. Here we prove that this upper bound can be achieved. In other words, $R_i^d$ can achieve the capacity of $h_{i1}$ and $R_i^u$ can achieve the whole MA region in Fig.~\ref{MAregion}.  A and B are the corner points of the MA region corresponding to $(C({{{{\left| {{h_{12}}} \right|}^2}{P_{R1}}} \mathord{\left/
 {\vphantom {{{{\left| {{h_{12}}} \right|}^2}{P_{R1}}} {({{\left| {{h_{22}}} \right|}^2}{P_{R2}} + 1)}}} \right.
 \kern-\nulldelimiterspace} {({{\left| {{h_{22}}} \right|}^2}{P_{R2}} + 1)}}), C({\left| {{h_{22}}} \right|^2}{P_{R2}}))$ and
$(C({\left| {{h_{12}}} \right|^2}{P_{R1}}),C({{{{{\left| {{h_{22}}} \right|}^2}{P_{R2}}} \mathord{\left/
 {\vphantom {{{{\left| {{h_{22}}} \right|}^2}{P_{R2}}} {(\left| {{h_{12}}} \right|}}} \right.
 \kern-\nulldelimiterspace} {(\left| {{h_{12}}} \right|}}^2}$ ${P_{R1}} + 1)))$respectively.

The main idea to prove the achievability of the MA region is using one point-to-point codeword at each RS to achieve the corner points (A and B) of the MA region.  Then we apply the time-sharing between the point-to-point codeword achieving corner point A and the point-to-point codeword achieving corner point B, thereby proving that the MA region is achievable.
\begin{figure}[h]
\centering
\includegraphics[width=10 cm]{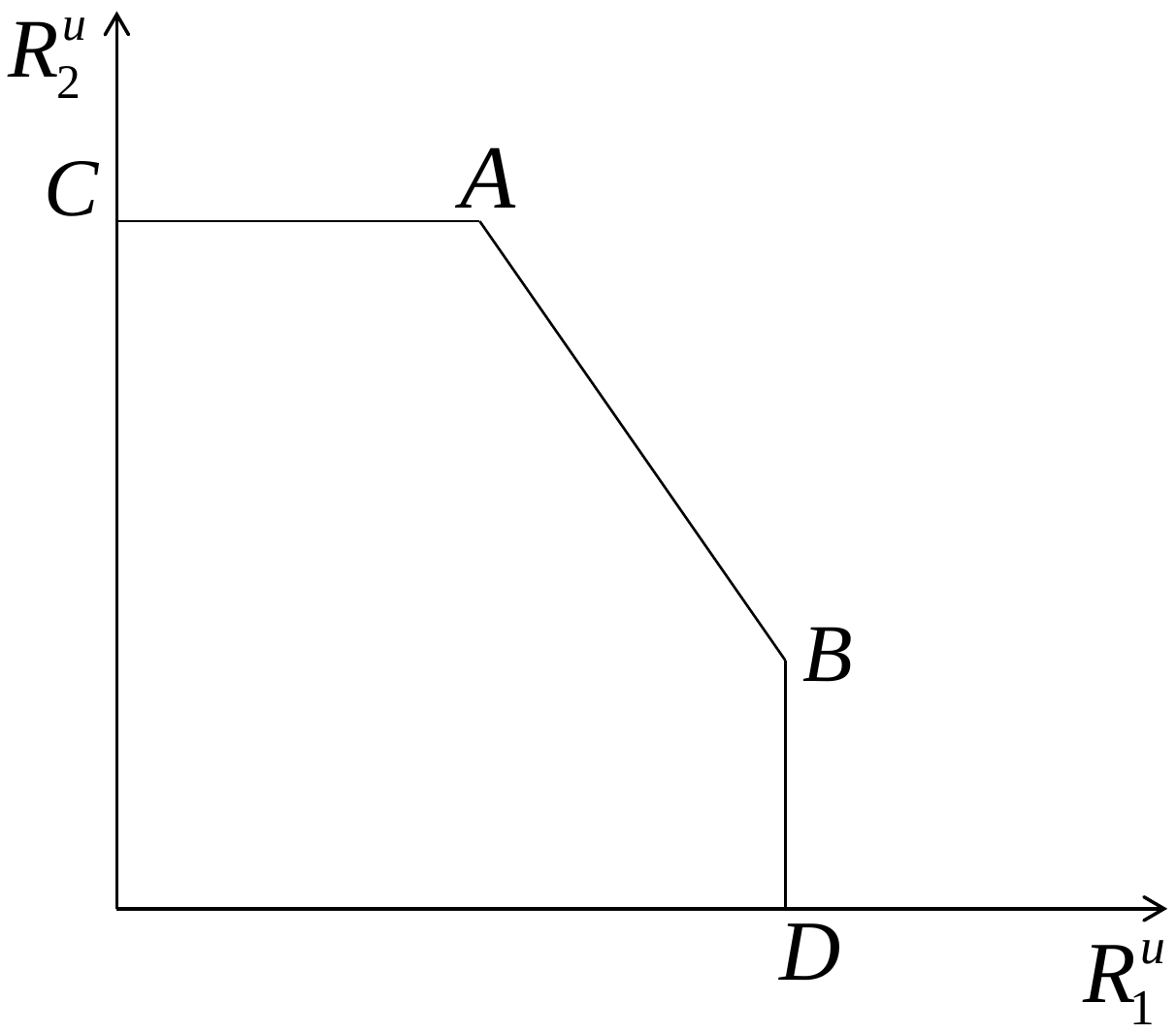}
\caption{The capacity region of a multiple access (MA) channel.}%
\label{MAregion}%
\end{figure}
For each RS, the transmission in phase 2 is a BC process with side information at the receivers. On the other hand, the simultaneous transmission of RS$1$ and RS$2$ are two components of the MA channel, defined at the BS as a receiver. This makes the two BC transmissions interrelated in terms of the achievable rates. In the following we will prove that MA region corner point A can be achieved and $R_i^d$ can achieve the capacity of $h_{i1}$ by the re-encoded codewords $x_{Ri}^A$ at RS$i$ .

When decoding, BS first treats $x_{R2}^A$ as noise in order to decode $x_{R1}^A$. Then for U$1$, RS$1$ and BS, this is a BC process with side information at the receivers. From \cite[Theorem 2]{KramerBC}, $R_1^d$ can achieve the capacity of $h_{11}$ which is $C( {\left| {{h_{11}}} \right|^2}{P_{R1}})$ (denote as $C_{11}$) and $R_1^u$ can achieve $C({{{{\left| {{h_{12}}} \right|}^2}{P_{R1}}} \mathord{\left/
 {\vphantom {{{{\left| {{h_{12}}} \right|}^2}{P_{R1}}} {({{\left| {{h_{22}}} \right|}^2}{P_{R2}} + 1)}}} \right.
 \kern-\nulldelimiterspace} {({{\left| {{h_{22}}} \right|}^2}{P_{R2}} + 1)}})$ (denote as ${R_1^u(A)}$). After the decoding of $x_{R1}^A$, BS cancels $x_{R1}^A$ from the receiving signal. Then for U$2$, RS$2$ and BS, this is a BC process with side information at the receivers. From \cite[Theorem 2]{KramerBC}, $R_2^u$ can achieve $C({\left| {{h_{22}}} \right|^2}{P_{R2}})$ (denote as ${R_2^u(A)}$) and $R_2^d$ can achieve $C( {\left| {{h_{21}}} \right|^2}{P_{R2}})$ (denote as $C_{21}$). By far, we proved that MA region corner point A can be achieved and $R_i^d$ can achieve the capacity of $h_{i1}$ by the re-encoded codewords $x_{Ri}^A$ at RS$i$. In a similar manner we can prove that the rates corresponding to the corner point B can be achieved.

To summarize, the codeword $x_{Ri}^A$ achieves $\left( {R_1^u(A), R_2^u(A), {C_{11}},{C_{21}}} \right)$ for rate tuple $(R_1^u,R_2^u,R_1^d,$ $R_2^d)$ and the codeword $x_{Ri}^B$ achieves $\left( {R_1^u(B), R_2^u(B), {C_{11}},{C_{21}}} \right)$ for  rate tuple $\left( {R_1^u, R_2^u, R_1^d, R_2^d} \right)$.
In order to achieve the rates on the line AB in Fig.~\ref{MAregion}, we can apply time-sharing between the codewords $x_{Ri}^A$ and $x_{Ri}^B$. Since U$1$, U$2$ and BS know the exact duration of $x_{Ri}^A$ and $x_{Ri}^B$ in the time-sharing codeword, the decoding process during $t$ is to decode $x_{Ri}^A$, the decoding process during $1-t$ is to decode $x_{Ri}^B$. Therefore, $\left( {tR_1^u(A) + \bar tR_1^u(B), tR_2^u(A) + \bar tR_2^u(B), {C_{11}},{C_{21}}} \right)$ is achievable for every $t \in [0,1]$,~$\bar t = 1 - t$. This proves that the line AB in Fig.~\ref{MAregion}, is achievable while $R_i^d$ can achieve the single-user capacity of the channel with $h_{i1}$. It should be noted that the time-sharing does not affect the decoding at U$i$, since U$i$ knows the detailed structure of the re-encoded codeword, such that it can apply the appropriate side information upon decoding.

Similarly, by time-sharing between corner points and axes points, the rates on the lines CA and BD in Fig.~\ref{MAregion} are also achievable. Then the whole MA region in Fig.~\ref{MAregion} is achievable while $R_i^d$ achieves the capacity of $h_{i1}$.  Summarizing the above results, we get (\ref{bcop}). The scaling factor $ 1 - \tau $ corresponds to the duration of phase 2.

\bibliographystyle{IEEEtran}
\bibliography{hua}

\end{document}